\documentclass[a4paper,UKenglish,cleveref, autoref, thm-restate]{lipics-v2021}

\hideLIPIcs  

\usepackage[linesnumbered,ruled,vlined]{algorithm2e}
\usepackage{float}
\usepackage{booktabs}
\newcommand{\Z}{\mathbb Z}
\newcommand{\Vmin}{V_{\mathrm{Min}}}
\newcommand{\Vmax}{V_{\mathrm{Max}}}

\newcommand{\val}{\mathrm{val}}

\newcommand{\MP}{\mathrm{MP}}

\newcommand{\re}[1]{\xrightarrow{#1}}

\newcommand{\esc}{\mathrm{esc}}

\newcommand{\supsum}{\sup \! \Sigma}
\newcommand{\infsum}{\inf \! \Sigma}

\newcommand{\pred}{\mathrm{pred}}

\title{A symmetric recursive algorithm for mean-payoff games} 

\author{Pierre Ohlmann}{CNRS, Laboratoire d'Informatique et des Systèmes (LIS), Marseille, France}{pierre.ohlmann@lis-lab.fr}{}{}

\authorrunning{P. Ohlmann} 

\Copyright{Jane Open Access and Joan R. Public} 

\ccsdesc[500]{Theory of computation~Logic and verification}

\keywords{Mean-payoff games, algorithms} 

\category{} 

\relatedversion{} 

\acknowledgements{We are grateful to  Pierre Gaillard, Antonio Casares and Aditya Prakash for discussions on this algorithm.}

\nolinenumbers 

\EventEditors{John Q. Open and Joan R. Access}
\EventNoEds{2}
\EventLongTitle{42nd Conference on Very Important Topics (CVIT 2016)}
\EventShortTitle{CVIT 2016}
\EventAcronym{CVIT}
\EventYear{2016}
\EventDate{December 24--27, 2016}
\EventLocation{Little Whinging, United Kingdom}
\EventLogo{}
\SeriesVolume{42}
\ArticleNo{23}

\begin{document}

\maketitle

\begin{abstract}
We propose a new deterministic symmetric recursive algorithm for solving mean-payoff games.
\end{abstract}

\section{Introduction}

In a mean-payoff game, Min and Max alternate in moving a token along the integer-weighted edges of a sinkless directed graph.
The play goes on for an infinite duration, and its value is defined to be the average of the weights on the long run.
These games were introduced by Ehrenfeucht and Mycielski~\cite{EM73} who proved their positional determinacy: to play optimally, players do not need memory, and can always make the same choice from each given vertex.
They were later studied in more depth by Gurvich, Karzanov and Khachyian~\cite{GKK88}, and then Zwick and Paterson~\cite{ZP96} who popularised the problem of determining which vertices have positive value, observed that it belongs to $\bf{NP} \cap \bf{coNP}$ (this is a simple consequence of positional determinacy), gave a pseudopolynomial algorithm, and posed the question of solving them in polynomial time.

As of today, a number of algorithms have been proposed, most prominent of which are the value iteration algorithm of Brim et al.~\cite{BCDGR11}, the GKK algorithm~\cite{GKK88}, and the strategy improvement algorithms of Björklund and Vorobiov~\cite{BV07} and of Schewe~\cite{Schewe08}; for a unified presentation of many algorithms, see also~\cite{CCO25}.
In terms of runtime bounds, if $n,m$ and $W$ respectively denote the number of vertices, number of edges, and maximal absolute value of a weight, the best known pseudopolynomial upper bound is $O(nmW \log(W))$ from~\cite{BCDGR11}, and the best known combinatorial bounds are $O(m2^{\frac n 2} \log(W))$ for the GKK algorithm~\cite{DKZ19,OhlmannGKK}\footnote{This was first established by Dorfman, Kaplan and Zwick for a involved variant of the value iteration algorithm. We believe that the resulting algorithm is in fact similar in spirit to the (conceptually simpler) GKK algorithm, and proved the same bound in~\cite{OhlmannGKK}.}, while the strategy improvement of~\cite{BV07}, akin to random facet for linear programs, runs in randomised subexponential time~$O(2^{O(\sqrt{n \log n})}\log(W))$.
A notable recent advancement from Loff and Skomra is an algorithm which runs in polynomial time over generic mean-payoff games, in the sense of smooth analysis~\cite{LS24}.
However, despite considerable efforts from the community spanning over several decades, no deterministic subexponential time algorithm is known for solving mean-payoff games.


Our contribution is a new deterministic algorithm solving mean-payoff games.
It is completely symmetric (i.e.~the two players are treated in the same way); from the algorithms discussed above, this property is shared only by the GKK algorithm.
It is naturally presented recursively, which is another novelty.
Another significant difference between our algorithm and every known mean-payoff algorithm, with the exception of the one of Zwick and Paterson~\cite{ZP96}, is that we do not compute energy values (see preliminaries for a definition).
Though we rely on potential reductions, which were introduced by Gurvich, Karzanov and Khachyian~\cite{GKK88} and are at the basis of the unifying framework proposed by Cadilhac, Casares and the author~\cite{CCO25}, our algorithm does not fit in this framework. 
We believe that our algorithm is a candidate for a subexponential runtime, but we leave such an upper bound to future work.

After a preliminary section (Section~\ref{sec:prelim}), we describe the algorithm (Section~\ref{sec:algorithmpresentation}), prove its correctness (Section~\ref{sec:proofs}), give a pseudocode (Section~\ref{sec:pseudocode}), propose optimisations and variants (Section~\ref{sec:optimisations-and-variant}) and conclude with a brief discussion (Section~\ref{sec:futurework}).

\section{Preliminaries and background}\label{sec:prelim}

We now give some background regarding mean-payoff games.
Our aim here is not to give only the strictly necessary definitions to be able to read the rest of the paper, but also we want to provide a bit of background and lay out a few intuitions to make the reading easier.
When it is not too much of a hassle, we try to avoid confusing terminology such as positive or non-negative, and instead just write $>0$ or $\geq 0$.

A game is a sinkless\footnote{A sink is a vertex with no outgoing edge.} directed graph $G=(V,E,w)$ whose edges are weighted by $w:E \to \Z$, and whose vertices are partitioned into $\Vmin$ and $\Vmax$ which we call Min-vertices and Max-vertices.
Intuitively, when we are at a Min-vertex, Min chooses which outgoing edge to follow, and likewise for Max.
A subgame is a sinkless subgraph.
A trap for Min is a subgame $G'$ such that all Min edges from $G'$ are towards $G'$, and similarly for Max.

As in the introduction, we write $n=|V|$, $m=|E|$ and $W=\max_{vv' \in E} |w(vv')|$.
We sometimes write edges and their weights as $v \re w v'$, and (finite and infinite) paths as $v_0 \re {w_0} v_1 \re{w_1} \dots$.
The sequence of partial sums of prefixes of a path, namely $0,w_0, w_0+w_1, \dots$, is called its profile.

A Min-strategy\footnote{We only consider positional strategies in this paper as they are sufficient for the games under scrutiny.} is comprised of a choice, for each Min vertex, of one of its outgoing edges.
A path is consistent with a Min-strategy if whenever a Min-vertex is visited, the next edge is the one specified by the strategy.
We will write $\pi \models \sigma$ to say that path $\pi$ is consistent with strategy $\sigma$.
Similarly for Max-strategies.
A valuation is a map from infinite paths to $\overline {\mathbb R}$; the three valuations which are relevant to this paper are the following:
\[
    \begin{array}{lcl}
        \MP(v_0 \re {w_0} v_1 \re{w_1} \dots) &=& \limsup_k \frac 1 k \sum_{i=0}^{k-1} w_i,\\
        \supsum^X(v_0 \re {w_0} v_1 \re{w_1} \dots) &=& \sup_{k\leq n_X} \sum_{i=0}^{k-1} w_i, \\
        \infsum^X(v_0 \re {w_0} v_1 \re{w_1} \dots) &=& \inf_{k\leq n_X} \sum_{i=0}^{k-1} w_i,
    \end{array}
\]
where $X$ is a set of vertices, and $n_X$ is the least $i$ such that $v_i \in X$ (if there is no such vertex, $n_X = \infty$, in which case the $\sup$ or $\inf$ ranges over all natural numbers $k$).
In words, when playing a mean-payoff game (i.e. a game with valuation $\MP$), the players seek to optimise the long term average.\footnote{Because of positional determinacy and finiteness of the game, the use of limsup rather than liminf is irrelevant in that it does not affect the value.}
When playing a game with valuation $\supsum^X$, the players seek to optimise the peek of the profile before $X$ is reached (at which point one may think that the game stops).
If $X$ is not reached then the game lasts forever, in particular we may have value $\infty$.
Similarly for $\infsum^X$.
Since the empty path, of weight $0$, is also taken into account, note that $\supsum^X$ takes only $\geq 0$ values and $\infsum^X$ takes $\leq 0$ values.
When $X$ is empty, we just write $\sup$ or $\inf$.
In the literature, such games (with empty $X$) are usually called energy games, and were introduced by~\cite{CAHS03} and further studied and popularised by~\cite{BFLMS08,BCDGR11}.
We now formally state positional determinacy.

\begin{theorem}[\cite{EM79,BFLMS08}]
    For any game, any subset $X$ of vertices, any $\val \in \{\MP, \supsum^X, \infsum^X\}$, and any vertex $v$, it holds that
    \[
        \inf_\sigma \sup_{\pi \models \sigma} \val(\pi) = \sup_\tau \inf_{\pi \models \tau} \val(\pi),
    \]
    where $\sigma$ and $\tau$ respectively range over strategies of Min and Max, and $\pi$ ranges over paths starting from $v$.
\end{theorem}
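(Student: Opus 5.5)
We prove the equality for each valuation by combining the trivial inequality $\sup_\tau \inf_{\pi\models\tau}\val(\pi) \le \inf_\sigma \sup_{\pi\models\sigma}\val(\pi)$ with an explicit construction in the reverse direction. The trivial inequality holds because, for any positional $\sigma,\tau$ and any $v$, exactly one infinite path from $v$ is consistent with both, and its value lies between $\inf_{\pi\models\tau}$ and $\sup_{\pi\models\sigma}$. For the reverse direction we exhibit a pair $(\sigma,\tau)$ and a number $c$ such that every path consistent with $\sigma$ has value $\le c$ and every path consistent with $\tau$ has value $\ge c$.

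For $\supsum^X$ (and symmetrically $\infsum^X$, by negating weights and swapping the players), we do a fixpoint/attractor-style computation. Vertices in $X$ have value $0$. For the others we consider the Bellman equation
\[
f(v)=\max\bigl(0,\ \mathrm{opt}_{v\re{w}v'}(w+f(v'))\bigr),
\]
with $f=0$ on $X$, where $\mathrm{opt}$ is $\min$ at Min-vertices and $\max$ at Max-vertices, and values range in $\mathbb{N}\cup\{\infty\}$. We compute its least solution by Kleene iteration from $f=0$ on the complete lattice $(\mathbb{N}\cup\{\infty\})^V$. The operator is monotone, so a least fixpoint exists.

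We then prove $f(v)=\infsum\sup$ in the following steps:
\begin{enumerate}
\item Max's strategy $\tau$ chooses, at each Max-vertex, an edge realising the maximum. A descending induction on stages shows that $\tau$ guarantees a profile peak of at least $f(v)$.
\item Min's strategy $\sigma$ chooses, at each Min-vertex, an edge realising the minimum. Along $\sigma$-consistent paths, the invariant $\text{partial sum} + f(\text{current}) \le f(v)$ is maintained, which bounds the peak.
\item For Min, we choose the realising edge with care: at each vertex we pick an edge that attained the minimum at the earliest Kleene stage. This ensures that $\sigma$-consistent cycles avoiding $X$ cannot keep the invariant while being "unjustified", so the $\infty$ case is handled correctly.
\end{enumerate}

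For $\MP$, the standard route is the Ehrenfeucht--Mycielski finite cycle-forming game. Alternatively, one can reduce to the energy case: for a rational threshold $q=a/b$ with $b\le n$, rescale the weights to $bw-a$. Then Max wins $\MP\ge q$ positionally exactly where the $\infsum$ energy value (with $X=\emptyset$) is finite, and Min wins $\MP\le q$ where the $\supsum$ value is finite. Applying the energy result with thresholds just above and just below the value $c$, and using that values are of the form $a/b$ with $b\le n$, yields positional strategies achieving $c$ from both sides. Note that positional strategies suffice here because under positional play every consistent path is eventually periodic, and its mean payoff is the mean of a simple cycle.

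I expect the main obstacle to be the energy game with $X=\emptyset$ and infinite values. The delicate points are to show that a finite least fixpoint is actually realised by a positional Min strategy, so that no $\sigma$-consistent cycle is positive, and to match this with the complementary statement that Max can force a positive cycle wherever $f=\infty$. My first attempt would be the earliest-stage tie-breaking argument of step 3. Failing that, I would cite the finite-duration cycle game argument of \cite{EM79}, inducting on the number of edges by removing one choice at a time.
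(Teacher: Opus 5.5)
The paper gives no proof of this theorem; it only quotes it from Ehrenfeucht--Mycielski and Bouyer et al. Your fallback of citing the former is essentially what the paper does. As a self-contained argument, however, your proposal covers only the easy half of positional determinacy.

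The side of the player who keeps the profile bounded is fine, and easier than you make it. For $\supsum^X$, let $f$ be any fixpoint of your operator and let Min pick any edge with $w(vv')+f(v')\le f(v)$. Then $\sum_{i<k}w_i+f(v_k)\le f(v_0)$ for all $k\le n_X$. This bounds the peak by $f(v_0)$, and summed around a cycle outside $X$ it shows that no such cycle is positive. No tie-breaking is needed, so your Step~3 is aimed at the wrong player.

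Max's earliest-stage greedy choice also works at finite-valued vertices: each step either raises $\sum_{i<k}w_i+f(v_k)$ by at least $1$ or lowers the stage. But this holds only until the play enters $U=\{v \mid f(v)=\infty\}$. Min may force that, e.g.\ from a Min vertex with $0<f(v)<\infty$ and an edge into $U$.

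The missing idea is a positional Max strategy on $U$ that makes every consistent profile unbounded. Your Step~1 rule fails there. Take Max vertices $a,b$ with edges $a\re{1}a$, $a\re{-1}b$, $b\re{-1}a$, and $X=\varnothing$. Then $f(a)=f(b)=\infty$, so both edges out of $a$ realise the maximum. Yet choosing $ab$ gives the play $abab\cdots$ with profile $0,-1,-2,\dots$ and peak $0$. Nor does ``earliest Kleene stage'' help, since $\infty$ is only reached at stage $\omega$.

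This is precisely the nontrivial content of the theorem, and your $\MP$ reduction depends on it too. To show that Max achieves $c$, you must rule out $\infsum=-\infty$ at threshold $c-\varepsilon$. That requires the dual statement: Min then has a positional strategy all of whose consistent cycles are negative. So the $\MP$ part inherits the same gap.

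Without zero cycles, one repair is to let Max be greedy for a finite iterate $f_K$ instead of $f$. Choose $K$ so large that $f_{K+1}>0$ on $U$ and the greedy edges stay in $U$. Then for every cycle $C$ inside $U$ consistent with this strategy, $\sum_{c\in C}f_{K+1}(c)\le w(C)+\sum_{c\in C}f_K(c)$. Since $f_{K+1}\ge f_K$, this gives $w(C)\ge 0$, hence $w(C)>0$.

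Zero cycles need separate care. The statement allows them, and rescaling at the exact threshold $c$ creates them; for $\MP$, your thresholds $c\pm\varepsilon$ avoid them.
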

The value from the theorem is called the $\val$-value of the vertex $v$, or just its value if $\val$ is clear from context.
Moreover, optimal strategies (which clearly exist since there are finitely many strategies) can be chosen independently of the starting vertex.
Thanks to the theorem, we may think of these games as cyclic ones; for instance, in a mean-payoff game, Min and Max fix optimal strategies, and then the value of a vertex is just the average weight of the unique cycle that can be reached from $v$ when following the two strategies.

In the mean-payoff threshold problem, we ask, given a game, which vertices have value $\leq 0$ and which vertices have value $>0$.
These are called the winning regions of the game.
The more general value problem, which consists in determining the exact value of the vertices (which is a rational number in $[-W,W]$ whose denominator is $\leq n$), reduces to the threshold problem by dichotomy: to test whether the value is $\leq x$, it suffices to solve the threshold problem in the game where all weights are shifted by $-x$.

As was first formally observed by~\cite{BCDGR11}, an easy consequence of the theorem is that vertices with mean-payoff value $<0$ have finite $\sup$-value and minus infinite $\inf$-value, and likewise, vertices with mean-payoff value $>0$ have infinite $\sup$-value and finite $\inf$-value.
Vertices with mean-payoff value $=0$ have both $\sup$ and $\inf$-values finite, and they can be tedious to work with because they might break symmetry (for instance, in the above definition of the mean-payoff threshold problem).
Fortunately, we may get rid of them without loss of generality, simply by replacing each weight $w$ by $nw +1$: this preserves the sign of all the nonzero cycles (and therefore the sign of nonzero values), and turns all cycles of weight zero into positive ones (and therefore, vertices of value zero now get a positive value).
Throughout the paper, we will work under the assumption that there are no cycles of weight zero.\footnote{Note that reducing to such games replaces $W$ by $nW+1$ which does not affect combinatorial bounds (i.e.~bounds independent on $W$).}

To sum up, we have two kinds of vertices: those with mean-payoff value $<0$, finite $\sup$-value, and minus infinite $\inf$-value, and those with mean-payoff value $>0$, infinite $\sup$-value, and finite $\inf$-value.
Our goal is to determine which vertex is of which kind.
We refer to this partition as the winning regions in the game.
As explained in the introduction, almost all known algorithms (to the best of our knowledge, all of them except~\cite{ZP96}) actually compute (explicitly or implicitly) the $\sup$-values (sometimes called energy values), in order to determine whether or not they are finite.

Next is a useful technical result about $\supsum$. (Naturally a similar statement holds for $\infsum$.)

\begin{lemma}\label{lem:technical-sup}
    In a game, we have
    \[
        \supsum^X(v_0 \re{w_0} v_1 \re{w_1} v_2 \re{w_2} \dots) = w_0 + \supsum^X(v_1 \re{w_1} v_2 \re{w_2}\dots)
    \]
    as long as $v_0 \notin X$ and the right hand side is $\geq 0$.
\end{lemma}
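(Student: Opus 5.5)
This is a direct manipulation of the definition of $\supsum^X$ as a supremum over prefix sums. Write $\pi = v_0 \re{w_0} v_1 \re{w_1} \dots$ and $\pi' = v_1 \re{w_1} v_2 \re{w_2} \dots$, and let $n_X$ and $n'_X$ be the first indices at which $\pi$ and $\pi'$ visit $X$. Since $v_0 \notin X$, the first visit of $\pi$ to $X$ happens exactly one step later than that of $\pi'$, so $n_X = n'_X + 1$. This also holds when both are $\infty$, with the convention $\infty+1=\infty$.

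Next, split the range $0 \le k \le n_X$ in the definition of $\supsum^X(\pi)$ into $k=0$ and $k\ge 1$. For $k\ge 1$, write $k = j+1$ with $0 \le j \le n'_X$. Then
\[
\sum_{i=0}^{k-1} w_i = w_0 + \sum_{i=1}^{j} w_i ,
\]
and the second sum is exactly the $j$-th prefix sum of $\pi'$. Since adding the constant $w_0$ commutes with taking a supremum (including an infinite one), this gives
\[
\supsum^X(\pi) = \max\bigl(0,\; w_0 + \supsum^X(\pi')\bigr).
\]

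It remains to remove the $\max$ with $0$. Note that the term $j=0$ is included in $\supsum^X(\pi')$, so the second argument of the $\max$ is always at least $w_0$. The hypothesis that the right-hand side $w_0 + \supsum^X(\pi')$ is $\ge 0$ is exactly what makes the $\max$ equal to its second argument. The value $+\infty$ needs no separate treatment, since it is $\ge 0$ and the formula holds trivially in that case.

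The only step requiring care is the index bookkeeping: the bound $k \le n_X$ must translate to $j \le n'_X$, including the boundary case $k=n_X$, where the visit to $X$ counts as the last allowed index. This is settled by the identity $n_X = n'_X+1$, which is where the assumption $v_0 \notin X$ is used. Without that assumption we would have $n_X=0$, and the left-hand side would be $0$.
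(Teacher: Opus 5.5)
Your proof is correct and matches the paper's argument: split off the $k=0$ term to get $\supsum^X(\pi)=\max\bigl(0,\, w_0+\supsum^X(\pi')\bigr)$, then drop the $\max$ using the hypothesis that the right-hand side is $\geq 0$. The identity $n_X=n'_X+1$ spells out the reindexing that the paper leaves implicit when it writes ``$n_X>0$''.
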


\begin{proof}
    Since $v_0\notin X$ we have $n_X>0$ therefore
    \[
        \begin{array}{lcl}
            \supsum^X(v_0 \re{w_0} v_1 \re{w_1} \dots) &=& \sup_{k\leq n_X} \sum_{i=0}^{k-1} w_i \\ &=& \max(0, \sup_{1 \leq k\leq n_X} \sum_{i=0}^{k-1} w_i) \\ & = & \max(0,w_0 + \supsum(v_1 \re{w_0} v_2 \re{w_2} \dots)) \\&= & w_0+  \supsum(v_1 \re{w_0} v_2 \re{w_2} \dots) 
        \end{array}
    \]
\end{proof}

We now introduce some additional concepts and terminology which are not part of the usual background but crucial to the paper.
An edge $vv'$ is immediately optimal if $v$ is a Min vertex and $vv'$ has minimal weight among edges outgoing from $v$, or $v$ is a Max vertex and $vv'$ has maximal weight among edges outgoing from $v$.
Given a game $G$, let $N,Z,P$ denote the sets of vertices for which the weight of an immediately optimal edge is $<0,=0,>0$ respectively.
Let $ZN$ (resp. $ZP$) denote the set of vertices such that Min (resp. Max) can force to see a $<0$ (resp. $>0$) edge before a $>0$ (resp. $<0$) one.
(The name $ZN$ comes from ``zero then $N$''.)
Note that $N \subseteq ZN$ and $P \subseteq ZP$, and that $ZN$ and $ZP$ partition the vertices (since there is no zero cycle).
If the game is not clear from context, we add it as a subscript, e.g. $P_G$ or $ZN_G$.
We call $N,Z,P,NZ$ and $ZP$ the zones of the game; they can be computed in $O(m)$ by standard backtracking (in standard parity-game terminology, $ZN$ is the Min-attractor over $\leq 0$-edges to $<0$ edges).

We say that a vertex from $ZN$ is reduced if Min can force to immediately see a $\leq 0$ edge towards $ZN$, and likewise a vertex from $ZP$ is reduced if Max can force to immediately see a $\geq 0$ edge towards $ZP$.
Note that vertices in $Z$ are always reduced, but not necessarily those in $N$ and $P$.
We say that a game is reduced if all vertices are reduced.
Note that in a reduced game, from $ZN$ Min has a strategy ensuring to only visit $\leq 0$ edges while forever staying in $ZN$ (in particular it is a trap for Max).
Therefore, in a reduced game, vertices in $ZN$ have mean-payoff value $<0$, and similarly vertices in $ZP$ have value $>0$.
Observe that any game such that $ZN$ or $ZP$ is empty (or equivalently, $N$ or $P$ is empty), is reduced; we say that the game is positively reduced if $ZN$ is empty, and negatively reduced otherwise.
Given a subset of vertices $X$, we say that a game is reduced (resp. positively or negatively reduced) over $X$ if $G \cap X$ is reduced (resp. positively or negatively reduced).

A potential is a map $\phi: V \to \Z$.
The associated potential reduction turns the game $G$ into $G_\phi$ by replacing each weight $w(vv')$ by its modified weight $w_\phi(vv')$ defined by $w_\phi(vv') = w(vv') + w(v') - w(v)$.
Note that weights of cycles are not altered (in particular, (1) the mean-payoff values and winning regions stay the same, and (2) there are still no cycles of weight zero in $G_\phi$),
and note that the profile of a path goes to infinity (or minus infinity) if and only if this is the case also for modified weights.
We write $N_\phi$ as a shorthand for $N_{G_\phi}$ when $G$ is clear from context (and likewise for $P,Z,ZN,ZP$).
A potential $\phi$ is called a reducing potential if $G_\phi$ is reduced, it is called a reducing potential over a subset $X$ of vertices if $G_\phi$ is reduced over $X$, and likewise for positively or negatively reducing.

\begin{theorem}[\cite{GKK88, OhlmannGKK}]
    Every game admits a reducing potential $\phi$.
    Moreover, the obtained partition into $ZN_\phi$ and $ZP_\phi$ is independent of the reducing potential $\phi$.
\end{theorem}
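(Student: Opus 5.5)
The potential is built from the energy values. Existence first. Let $\phi^+(v)=\supsum(v)\in \mathbb N\cup\{\infty\}$ be the $\sup$-value and $\phi^-(v)=\infsum(v)\in -\mathbb N\cup\{-\infty\}$ the $\inf$-value. By the discussion above, vertices of mean-payoff value $<0$ (call this set $V_-$) have finite $\phi^+$, and vertices of value $>0$ (set $V_+$) have finite $\phi^-$. The first point to check is that $V_-$ is a trap for Max and $V_+$ a trap for Min, which follows from the optimal positional strategies. I then define $\phi(v)=-\phi^+(v)+C$ on $V_-$ and $\phi(v)=-\phi^-(v)-C'$ on $V_+$, where the constants shift the two parts apart. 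Here I must match the convention $w_\phi(vv')=w(vv')+\phi(v')-\phi(v)$ (the paper's ``$w(v')-w(v)$'' is a typo for $\phi$). The goal is $ZN_\phi=V_-$ and $ZP_\phi=V_+$, with every vertex reduced.

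\textbf{Within $V_-$.} The one-step optimality equation for the $\sup$-game is justified by Lemma~\ref{lem:technical-sup} with $X=\emptyset$, together with positional determinacy:
\[
\phi^+(v)=\max\bigl(0,\ \mathrm{opt}_{v\to v'}(w(vv')+\phi^+(v'))\bigr),
\]
where $\mathrm{opt}$ is $\min$ for Min and $\max$ for Max. Hence, with $\phi=-\phi^+$ (up to a constant), every Min vertex of $V_-$ has an edge with $w_\phi\le 0$, and every Max edge inside $V_-$ has $w_\phi\le 0$. The same holds for edges leaving Min vertices toward $V_+$, because those have large $\phi$-difference only if $C$ is chosen suitably, and Min never needs them anyway.

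So inside $V_-$, Min can keep all modified weights $\le 0$. Since there are no zero cycles, and $V_-$ is a trap for Max, every play under this strategy eventually sees a strictly negative edge before any positive one. This puts $V_-\subseteq ZN_\phi$ and makes the vertices reduced. The symmetric argument with $\phi^-$ gives $V_+\subseteq ZP_\phi$. Since $ZN_\phi$ and $ZP_\phi$ partition $V$, equality follows.

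\textbf{Cross edges (main obstacle).} The delicate step is handling edges between $V_-$ and $V_+$. These are Max edges from $V_-$ into $V_+$ and Min edges from $V_+$ into $V_-$. They must not spoil immediate optimality: a Max vertex in $V_-$ must not have a positive-weight escape to $V_+$. Choosing $C$ and $C'$ so that $\phi$ on $V_-$ is far above $\phi$ on $V_+$ (e.g. a gap larger than $nW$ plus the range of the finite energy values) makes every edge $V_-\to V_+$ strongly negative, since that weight is $\phi(v')-\phi(v)$ plus something small. It also makes every edge $V_+\to V_-$ strongly positive. This is exactly the direction needed: Max leaving $V_-$ gets a negative edge, and Min leaving $V_+$ gets a positive edge. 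I would verify the bounds $\phi^\pm\le nW$ in absolute value, using the fact that the energy values are bounded by $nW$ when finite.

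\textbf{Uniqueness.} Let $\phi$ be any reducing potential. In $G_\phi$, Min has a strategy from $ZN_\phi$ staying in $ZN_\phi$ with only $\le 0$ edges, so those vertices have mean-payoff value $<0$ (there are no zero cycles, and cycle weights are unchanged by the potential). Symmetrically, $ZP_\phi$ consists of vertices of value $>0$. Since the two sets partition $V$, they coincide with $V_-$ and $V_+$, the winning regions. These are independent of $\phi$ because potential reductions preserve cycle weights and hence values.
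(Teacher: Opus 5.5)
Your overall route is the paper's: energy values as the potential on the two winning regions, and independence because in any reduced game $ZN_\phi$ and $ZP_\phi$ must be the winning regions. Your uniqueness paragraph is fine. The existence part, however, has a sign error that breaks it as written.

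With your convention $w_\phi(vv')=w(vv')+\phi(v')-\phi(v)$, the optimality equation gives $w(vv')+\phi^+(v')-\phi^+(v)\le 0$. This holds for Min's optimal edge and for every edge of a Max vertex in $V_-$. That inequality is $w_\phi(vv')\le 0$ for $\phi=+\phi^+$, not for $\phi=-\phi^+$. With your choice, $w_\phi(vv')=w(vv')+\phi^+(v)-\phi^+(v')$, which can be positive. Take a Max vertex $a$ whose only edge is $a\re{5}b$, and a Min vertex $b$ with a self-loop of weight $-1$. Then $\phi^+(a)=5$ and $\phi^+(b)=0$, and your potential gives $w_\phi(ab)=10$. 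So $a\in P_\phi\subseteq ZP_\phi$ although $a$ is winning for Min, and $a$ is not even reduced, since its only edge leads to $b\in N_\phi$. The same flip occurs on $V_+$: you need $\phi=+\phi^-=\infsum$ there, not $-\phi^-$. This is exactly the paper's potential: $\supsum$ (nonnegative) on the value-$<0$ vertices and $\infsum$ (nonpositive) on the value-$>0$ vertices.

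Your ``main obstacle'' is not an obstacle, and the edges it analyses do not exist. You showed that $V_-$ is a trap for Max and $V_+$ a trap for Min, so there are no Max edges $V_-\to V_+$ and no Min edges $V_+\to V_-$. The edges that do cross are Min edges out of $V_-$ and Max edges out of $V_+$. They are harmless for any values of $\phi$, for two reasons:
\begin{itemize}
\item At a Min vertex of $V_-$, membership in $ZN_\phi$ and reducedness only require one $\le 0$ edge into $ZN_\phi$, and Min finds it inside $V_-$.
\item The ``all edges'' condition at Max vertices of $V_-$ only involves edges that stay in $V_-$.
\end{itemize}
The situation in $V_+$ is symmetric. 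So once the sign is fixed, your ``within $V_-$'' argument plus the partition argument already gives $ZN_\phi=V_-$, $ZP_\phi=V_+$ and reducedness. No separating constants $C,C'$ and no $nW$ bound are needed; the unshifted potential $\supsum\uplus\infsum$ works directly.
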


By the discussion above, the obtained partition clearly coincides with the mean-payoff winning regions (and in particular, it is independent of $\phi$).
Existence of a reducing potential is also an easy proof: an example of a reducing potential is the potential defined by assigning, to vertices with mean-payoff value $<0$, their (positive, finite) $\sup$-value, and to vertices with mean-payoff value $>0$, their (negative, finite) $\inf$-value.
As alluded to earlier, our algorithm will not necessarily compute this potential.

\section{Presentation of the algorithm}\label{sec:algorithmpresentation}

We first try to explain some of the intuitions behind the algorithm; it could be that it's actually easier for the reader to skip the next part and jump to the more formal presentation of the algorithm, which is just roughly one page long (or even, maybe for some readers it's easier to go directly to the pseudocode in Section~\ref{sec:pseudocode}).

\paragraph*{Informal overview of the algorithm}

Let us turn our attention to the set of vertices $N$.
These are the vertices $v$ where Min can ensure that the first edge which is visited has weight $<0$.
In a best case scenario (for Min), she can even ensure that from $v$, the profile starts with a negative edge and then never again takes a $>0$ value; in this case the $\sup$-value of $v$ is $0$ (and thus its mean-payoff value is $<0$, but from now on we will no longer mention mean-payoff values).
For the purpose of this explanation, say that such a vertex in $N$ is strongly negative.

Now observe that conversely, as long as Min's winning region is non-empty, there exists a strongly negative vertex.
This is because from a vertex $v$ which is winning for Min, if both players play optimally, the vertex $v'$ where the peak of the profile is reached has to be strongly negative, since the profile will never again exceed it.

Now of course, not all vertices in $N$ are strongly negative (possibly even none of them).
But as a first approximation (which plays in the favor of Min), the algorithm will start by imagining that all vertices in $N$ are strongly negative.
Under this assumption, it is reasonable to expect that one may be able to compute the $\sup$ values by somehow backtracking from $N$; this corresponds in fact to computing the $\supsum^N$ values, which will be the first focus of the algorithm.
(Note here that we make an asymmetric choice of favoring $N$ over $P$, but ignore this for the moment.)
We thus maintain a set $F$ of vertices for which we have already computed the $\supsum^N$-values; initially we set $F$ to be $N$, where the $\supsum^N$ value is $0$.

$(*)$ The first bit of backtracking is quite straightforward: it is easy to compute the $\supsum^N$-value of vertices which have all their edges pointing towards $F$.
Doing this recursively, we may find the $\supsum^N$ value of all vertices whose paths all visit $F$, and add them to $F$.

The next bit is trickier.
Note that removing $F$ leaves a subgame, let us call it $H$.
We call the algorithm recursively on $H$, which breaks $H$ into its winning regions $H^-$ and $H^+$ and gives a reducing potential $\phi_H$ for $H$
(See Figure~\ref{fig:fig1}).
Now the goal will be to find a single vertex from $H$ for which we are able to determine the $\supsum^N$ value.

\begin{figure}[h]
    \begin{center}
        \includegraphics[width=0.55\linewidth]{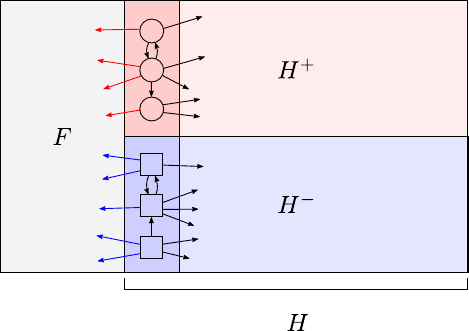}
        \caption{An illustration of the situation. Circles are Min-vertices and squares are Max-vertices. Red edges are escapes from $H^+$ and blue edges are escapes from $H^-$.}\label{fig:fig1}
    \end{center}
\end{figure}

To find such a vertex, we focus on those vertices which have means of escaping $H$, meaning Min-vertices in $H^+$ and Max-vertices in $H^-$ which have some edges towards $F$.
Therefore the goal is to find an escape which is optimal for $\supsum^N$.
The key insight here is that the reducing potential $\phi_H$ somehow levels out the vertices in $H$, allowing to compare the different possibilities of escaping, and being able to determine, for a single vertex $v$, that the player who controls this vertex is better-off escaping to $F$.

The first case is when $H^+$ is non-empty.
In the case where there is no edge from $H^+ \cap \Vmin$ to $F$ (the red edges in Figure~\ref{fig:fig1}), then vertices in $H^+$ have $\sup_N$-value $\infty$ in $G$ and therefore also $\sup$-value $\infty$, and moreover $\phi_H$ provides a reducing potential over these vertices.
Therefore, by a standard backtracking step, we compute the set $A \supseteq H^+$ of vertices of $G$ such that Max can ensure that the game reaches $H^+$ from $A$, together with a reducing potential over $A$, and then we may safely remove $A$ from the game (this is justified in Lemma~\ref{lem:gluingpotentials}), and recurse over $G \setminus A$.

Assume now that we have edges from $H^+ \cap \Vmin$ towards $F$.
In this case, clearly Min should escape $H^+$, since this gives finite $\sup_N$-value, whereas staying in $H^+$ grants her infinite $\sup_N$-value.
Using the reducing potential $\phi_H$ we may determine the best escaping edge $vv'$ (with $v \in H^+ \cap \Vmin$ and $v' \in F$), which turns out to be the one minimising $w(vv')+\supsum^N(v')-\phi_H(v)$ (see Lemma~\ref{lem:main1}a).
Therefore we may add $v$ to $F$ and continue our backtracking, meaning, go back to $(*)$.

In the case where $H^+$ is empty (and $H^-$ is not), we will see that there are necessarily some edges from $H^- \cap \Vmax$ to $F$ (the blue edges in Figure~\ref{fig:fig1}), and that the optimal escaping edge $vv'$ from $H^- \cap \Vmax$ to $F$ is the one maximising $w(vv') + \supsum^N(v')-\phi_H(v)$ (see Lemma~\ref{lem:main1}b).
Hence we proceed as previously, adding $v$ to $F$ and going back to $(*)$.

In this fashion, after a total of $\leq n$ recursive calls over smaller games, we either exit the loop at $(**)$, in which case we have reduced a part $A$ of the game which is winning for Max and may safely recurse on $G \setminus A$, or we have successfully computed the $\supsum^N$-value over $G$, and they are finite.

At this stage, since the $\supsum^N$-values provide a lower bound to the $\sup$-values, we may re-examine our assumption about strongly negative vertices, for instance if a Max vertex $v$ in $N$ has a $-1$ edge towards a vertex whose $\supsum^N$ value is $2$, then clearly $v$ cannot be strongly negative.
A nice way to do this is simply to consider the potential assigning its $\supsum^N$-value to each vertex: this has the effect of potentially turning some of the vertices in $N$ to $P$ or $Z$, while turning all vertices out of $N$ to $Z$.
(More formally, if $\phi$ denotes this potential, it is not hard to see that $N_\phi$ and $P_\phi$ are included in $N_G$ (Lemma~\ref{lem:main4}).)
Therefore either $|N|+|P|$ shrinks, or one of $N$ and $P$ becomes empty, in which case we are done because the game is then reduced.

Recall that we made an arbitrary choice of focussing on $N$ and computing $\supsum^N$, but we could alternatively focus on $P$ and compute $\infsum^P$ using the dual algorithm.
Naturally, this leads to the symmetrical outcome: either we find a part which is winning for Min (and a reducing potential for this part), or we find a potential $\phi$ (namely, $\phi=\infsum^P$) such that in $N_\phi$ and $P_\phi$ are included in $P$.

We will choose to compute $\supsum^N$ if $|N| \leq |P|$ and $\infsum^P$ otherwise.
This case disjunction is applied to every recursive call, which is why the algorithm is symmetric.

\paragraph*{Formal presentation of the algorithm}

We now repeat the above presentation in a less verbose style.
A more formal pseudo-code is given in Section~\ref{sec:pseudocode}.

\begin{enumerate}
    \item Compute the zones $N,Z,P,ZN$ and $ZP$, and check whether the game is reduced.
    If this is the case, we stop.
    \item Choose whichever of $N$ and $P$ is smallest; here we assume $|N| \leq |P|$ for concreteness (otherwise, dualise everything). 
    The goal of the subsequent steps will be to compute $\supsum^N$ over $G$, together with a reducing potential over vertices with infinite $\supsum^N$-value.
    Throughout the following steps we will update a set $F$ of vertices such that the value of $\supsum^N$ is already known for vertices in $F$.

    \item We may already add $N$ to $F$ since $\supsum^N$ is zero over $N$.

    \item\label{item:back} Now we deal with the set $S$ of vertices not in $F$ from which all paths lead to $F$: it is easy to compute their $\supsum^N$ value by backtracking, and then add them to $F$.
    
    \item\label{item:loop} Let $H$ be the subgame obtained by removing from $G$ all vertices in $F$ (note that this is indeed a subgame thanks to the previous step).
    Call the algorithm recursively on $H$ thereby obtaining a reducing potential $\phi_H$ for $H$ with corresponding winning regions $H^-$ and $H^+$ for Min and Max, respectively.
    There are a few cases.
    \begin{enumerate}
        \item\label{item:ZPHnonempty} If $H^+$ is non-empty.
        \begin{enumerate}
            \item\label{item:escapefromZPH} If there is an edge from $H^+ \cap \Vmin$ to $F$.
            Then let $vv'$ be such an edge minimising $w(vv')+\supsum^N(v')-\phi_H(v)$.
            We claim that $\sup_N(v)=\sup_N(v')+w(vv')$ (see Lemma~\ref{lem:main1}a), therefore we may add $v$ to $F$, and then go back to step~\ref{item:back}.
            \item\label{item:noescapefromZPH} If there is no edge from $H^+ \cap \Vmin$ to $F$.
            Then Max may play optimally in $H^+$ while forcing the game to remain inside of it, thus the $\sup$-values over $H^+$ are $\infty$: we have found a part of the winning region for Max in $G$.
            By a standard backtracking, we compute the set $A \supseteq H^+$ of vertices from which, in $G$, Max can ensure that the game reaches $H^+$.
            Note that $\phi_H$ provides a reducing potential over $H^+$, and it is easily extended to a reducing potential over $A$ during the backtracking step.
            We may therefore recurse on $G \setminus A$ (see Lemma~\ref{lem:gluingpotentials}), which is easily seen to indeed be a subgame.
        \end{enumerate}
        \item\label{item:ZPHempty} Otherwise, $H^+$ is empty.
        \begin{enumerate}
            \item\label{item:escapefromZNH}
            If $H^-$ is non-empty, then there is an edge from $H^- \cap \Vmax$ towards $F$ (see first part of Lemma~\ref{lem:main1}b).
            We let $vv'$ be such an edge maximising $w(vv')+\supsum^N(v')-\phi_H(v)$.
            We claim that $\supsum^N(v)=\supsum^N(v') + w(vv')$ (see second part of Lemma~\ref{lem:main1}b), therefore we may add $v$ to $F$ and then go back to step~\ref{item:back}.
            \item\label{item:noescapefromZNH} Otherwise, $H$ is empty.
            Therefore we have finished computing $\supsum^N$-values over $G$, and they are all finite.
            We let $G'$ be obtained from performing the corresponding potential reduction over $G$, observe that $N_{G'}$ and $P_{G'}$ are both included in $N_G$ (see Lemma~\ref{lem:main4}), and recurse on $G'$.
        \end{enumerate}
    \end{enumerate}
    Since the size of $F$ increases every time~\ref{item:escapefromZPH} or~\ref{item:escapefromZNH} occurs, we must reach either~\ref{item:noescapefromZPH} or~\ref{item:noescapefromZNH} after a linear number of recursive calls performed on strictly smaller games.
    If~\ref{item:noescapefromZPH} is reached then the next recursive call is on a strictly smaller game.
    If~\ref{item:noescapefromZNH} is reached then (see Lemma~\ref{lem:main4}), the next recursive call is performed on a game $G'$ such that either $N_{G'}$ or $P_{G'}$ is empty, in which case $G'$ is reduced so the algorithm stops, or the sizes of $N_{G'}$ and $P_{G'}$ are both smaller than the smallest among the sizes of $N_G$ and $P_G$, which guarantees termination.
\end{enumerate}

We also give a more formal pseudocode in Section~\ref{sec:pseudocode}.

\section{Proofs for the main lemmas}\label{sec:proofs}

We go on to proving the three main lemmas that are used for the algorithm.
The first lemma explains how to glue together reducing potentials under some assumptions, justifying the recursive call in step~\ref{item:noescapefromZPH}.

\begin{lemma}\label{lem:gluingpotentials}
    Let $G$ be a game and $A$ be a trap for Min, let $\phi_A$ be a positively reducing potential over $A$, let $G' = G \setminus A$ which is assumed to be a trap for Max, and let $\phi'$ be a reducing potential over $G'$.
    Let $\delta$ be a constant such that for every $v' \in G'$ and $a \in A$ we have
    \[
        \delta \geq -w(v'a) - \phi_A(a) + \phi'(v'),
    \]
    and let $\phi=(\phi_A + \delta) \uplus \phi'$ be the potential which coincides with $\phi_A + \delta$ over $A$ and with $\phi'$ over $G'$.
    Then $\phi$ is a reducing potential for $G$.
\end{lemma}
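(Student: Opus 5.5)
We will verify directly that every vertex of $G_\phi$ is reduced. Throughout we read the modified weight as $w_\phi(vv')=w(vv')+\phi(v')-\phi(v)$. The plan is to classify edges by where their endpoints lie, and then to exhibit, for each vertex, the strategy witnessing that it is reduced.

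\textbf{Edge types.} There are four kinds of edges.
\begin{enumerate}
\item Edges inside $A$. Adding the constant $\delta$ to the potential cancels, so $w_\phi=w_{\phi_A}$ on these edges.
\item Edges inside $G'$. Here $w_\phi=w_{\phi'}$.
\item Edges $v'a$ with $v'\in G'$ and $a\in A$. Here $w_\phi(v'a)=w(v'a)+\phi_A(a)+\delta-\phi'(v')\geq 0$, by the choice of $\delta$. This is the only place $\delta$ is used.
\item Edges from $A$ to $G'$. These can only leave Max-vertices, because $A$ is a trap for Min. Their weights are uncontrolled, but Max is never forced to use them.
\end{enumerate}
Symmetrically, edges from $G'$ to $A$ can only leave Min-vertices, because $G'$ is a trap for Max.

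\textbf{Forcing lemma.} The key auxiliary fact concerns a set $Y$ and a Max-strategy ensuring that the play stays in $Y$ and only uses edges of $w_\phi$-weight $\geq 0$. Against any Min behaviour, such a play must eventually see a $>0$ edge. Otherwise the play is an infinite path of zero edges in a finite graph, so it contains a cycle of weight zero. Potential reductions preserve cycle weights, so this contradicts the standing assumption that there are no zero cycles. Hence $Y\subseteq ZP_\phi$. The dual statement holds for Min and $ZN_\phi$.

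\textbf{The vertices of $A$.} The potential $\phi_A$ is positively reducing over $A$, so $A=ZP$ in $(G\cap A)_{\phi_A}$ and every vertex of $A$ is reduced there. Concretely, each Max-vertex of $A$ has an edge inside $A$ of $w_{\phi_A}$-weight $\geq 0$, and each Min-vertex of $A$ has all its edges inside $A$ with weight $\geq 0$. Let Max use those edges. Min cannot leave $A$, so the forcing lemma gives $A\subseteq ZP_\phi$, and these same edges witness that every vertex of $A$ is reduced in $G_\phi$.

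\textbf{The vertices of $G'$.} Split $G'$ into $ZN'$ and $ZP'$, the zones of $G'_{\phi'}$.
\begin{itemize}
\item For $v\in ZN'$, Min has a strategy inside $G'$ using only $\leq 0$ edges toward $ZN'$. Max cannot leave $G'$, so by the dual forcing lemma $ZN'\subseteq ZN_\phi$, with the same edges witnessing reducedness.
\item For $v\in ZP'$, each Max-vertex has a $\geq 0$ edge inside $G'$ toward $ZP'$. Each Min-vertex has all its $G'$-edges $\geq 0$ toward $ZP'$; its remaining edges go to $A$, which are $\geq 0$ by the third edge type and land in $A\subseteq ZP_\phi$. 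Max follows his edges. The play then stays in $ZP'\cup A$, where Max uses the $A$-strategy once inside $A$, and it only sees $\geq 0$ edges. The forcing lemma gives $ZP'\subseteq ZP_\phi$, and reducedness follows as before.
\end{itemize}

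Since $ZN_\phi$ and $ZP_\phi$ partition $V$, these inclusions identify the zones exactly, and every vertex of $G_\phi$ is reduced.

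\textbf{Expected main obstacle.} The delicate step is the case $v\in ZP'$. It is the only place where edges crossing from $G'$ into $A$ must be controlled, which is exactly what $\delta$ is for. It also requires the forcing lemma to be applied to a region spanning both parts, relying on $A\subseteq ZP_\phi$ having been established first.
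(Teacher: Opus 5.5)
Your proof is correct and follows essentially the same route as the paper. Both arguments establish $ZN_{\phi'}\subseteq ZN_\phi$ and $ZP_{\phi'}\cup A\subseteq ZP_\phi$ via one-step strategies, with $\delta$ used only on edges from $G'$ into $A$, and conclude from the partition. Your forcing lemma (no zero cycles, so a play using only $\geq 0$ edges eventually sees a $>0$ edge) just makes explicit a step the paper leaves implicit.
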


\begin{proof}
    First, note that $G'_{\phi_A}$ and $G'_{\phi_A + \delta}$ are the same games, therefore $G'_{\phi_A + \delta}$ is reduced, and the winning regions in both games coincide.
    We claim that $ZN_\phi=ZN_{\phi'}$ and $ZP_\phi=ZP_{\phi'} \cup A$.
    We first observe that in $G_\phi$, Max has a strategy, from $ZP_{\phi'} \cup A$, to ensure that the first edge which is seen is $\geq 0$ and towards $ZP_{\phi'} \cup A$.

    Indeed, from $A$ Max may even force a $\geq 0$ edge towards $A$, this is because $\phi_A$ (and likewise $\phi_A + \delta$) is positively reducing, $A$ is a trap for Min, and $\phi$ coincides with $\phi_A + \delta$ over $A$.
    Now from $ZP_{\phi'}$, Max has a strategy to either see an edge of $\phi$-modified weight $\geq 0$ towards $ZP_{\phi'}$ (since $\phi'$ is reducing in $G'$ and coincides with $\phi$), or to see an edge $v'a$ towards $A$.
    In the latter case, the edge $v'a$ has $\phi$-modified weight $w(v'a) + \phi_A(a) + \delta - \phi'(v')$, which is $\geq 0$ thanks to our assumption on $\delta$.

    Likewise, we should also prove that from $ZN_{\phi'}$, Min has a strategy ensuring to first see an edge of $\phi$-weight $\leq 0$ towards $ZN_{\phi'}$.
    This is clear since $G'$ is a trap for Max, $\phi'$ is reducing over $G'$, and $\phi$ and $\phi'$ coincide on $G'$.
\end{proof}

The next crucial lemma explains how to obtain an optimal escaping edges when a reducing potential is provided.

\begin{lemma}\label{lem:main1}
    Let $G$ be a game, let $F$ be a set of vertices containing $N$ and such that $\supsum^N$ is finite over $F$, let $H$ be obtained by removing $F$ from $G$ and assume that $H$ is a subgame.
    Let $\phi_H$ be a reducing potential for $H$ and let $H^+$ and $H^-$ denote the winning regions.
    \begin{enumerate}[a.]
        \item Assume that there is an edge from $H^+ \cap \Vmin$ to $F$ and let $vv'$ be such an edge minimising $w(vv') + \supsum^N(v') - \phi_H(v)$.
        Then $\supsum^N(v)=\supsum^N(v') + w(vv')$.
        \item Assume that $H^+$ is empty, and that $H^-$ is not.
        Then there is an edge $vv'$ from $v \in H^- \cap \Vmax$ to $v' \in F$.
        Let $vv'$ be such an edge maximising $w(vv') + \supsum^N(v') - \phi_H(v)$.
        Then $\supsum^N(v)=\supsum^N(v') + w(vv')$.
    \end{enumerate}
    
\end{lemma}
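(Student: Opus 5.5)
The plan is to prove both parts by matching an upper and a lower bound on $\supsum^N(v)$. One bound comes from the player owning $v$ taking the edge $vv'$. The other comes from the opponent following the reducing potential $\phi_H$ inside $H$. Two observations are used throughout. First, since $N\subseteq F$, no vertex of $H$ lies in $N$, so no play inside $H$ stops. Second, a vertex $u\notin N$ has an immediately optimal edge of weight $\geq 0$: if $u$ is a Min-vertex, all its edges have weight $\geq 0$; if $u$ is a Max-vertex, it has some edge of weight $\geq 0$. I also use positional determinacy for $\supsum^N$, so that an optimal strategy from $F$ can be composed with a strategy inside $H$. Finally, along a path $u_0\dots u_k$ inside $H$, the profile equals $\phi_H(u_k)-\phi_H(u_0)$ plus the sum of the $\phi_H$-modified weights.

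For (a), the upper bound is the easy direction. Min plays $vv'$ and then an optimal strategy from $v'$. Since $v\notin N$ is a Min-vertex, $w(vv')\geq 0$, so Lemma~\ref{lem:technical-sup} gives $\supsum^N(v)\leq w(vv')+\supsum^N(v')$. For the lower bound, Max plays as follows. Inside $H^+$ she follows the reducing strategy given by $\phi_H$, which always takes an edge of modified weight $\geq 0$ towards $H^+$. Once the play enters $F$, she plays optimally. Since $H^+$ is Max's winning region of the reduced game $H_{\phi_H}$, Min-vertices of $H^+$ have no edge into $H^-$, so a play can leave $H^+$ only through a Min edge $uu'$ with $u'\in F$. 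If the play never leaves, its modified weights are $\geq 0$ and it has no zero cycle, so the profile diverges and the value is $\infty$. Otherwise the profile at $u$ is at least $\phi_H(u)-\phi_H(v)$. Adding $w(uu')+\supsum^N(u')$ (again via Lemma~\ref{lem:technical-sup}) and using the minimality of $vv'$ gives at least $w(vv')+\supsum^N(v')$.

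For (b), existence of the escape edge is proved by contradiction. Suppose no Max-vertex of $H$ has an edge to $F$. Every Max-vertex of $H$ has an edge of weight $\geq 0$, and that edge then stays in $H$; every Min-vertex has only edges of weight $\geq 0$. So from any vertex of $H$, Max forces an infinite play in $H$ with all weights $\geq 0$. Every cycle of such a play is positive, which contradicts $H=H^-\neq\emptyset$ being Min-winning. For the values, the lower bound is immediate: Max plays $vv'$, so $\supsum^N(v)\geq\max(0,w(vv')+\supsum^N(v'))$. For the upper bound, Min follows the $\phi_H$-reducing strategy in $H$, taking edges of modified weight $\leq 0$. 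If Max escapes through $uu'$, the profile up to and including that escape is at most $\phi_H(u)-\phi_H(v)+w(uu')+\supsum^N(u')$. By the maximality of $vv'$ this is at most $w(vv')+\supsum^N(v')$.

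The main obstacle is the plays in (b) that stay in $H$ forever, and also the intermediate prefixes before an escape. There the profile is only bounded by $\phi_H(u)-\phi_H(v)$ for visited $u$, and I would need $\phi_H(u)-\phi_H(v)\leq w(vv')+\supsum^N(v')$. My first attempt is to derive this inequality from maximality applied at $u$ itself. From $u$, Max can force an escape: by the existence argument applied to the subgame of vertices from which Max cannot force reaching $F$, that subgame would be a trap for Max on which she forces nonnegative plays, again a contradiction. I would then use that at each Max-vertex $u$ with an escape $uu''$, the comparison gives $\phi_H(u)\leq w(uu'')+\supsum^N(u'')+\phi_H(v)-w(vv')-\supsum^N(v')$. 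Combined with $w(uu'')+\supsum^N(u'')\geq 0$, this should give what is needed, possibly after arguing that the profile's peak is attained at a Max-vertex of $H$. If this fails, the fallback is to modify Min's strategy so that she herself escapes to $F$ whenever the reducing bound threatens to exceed the target.
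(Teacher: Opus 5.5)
Your part (a) essentially matches the paper's proof, and so does your existence argument in (b), but there is a sign slip. With $w_{\phi}(xy)=w(xy)+\phi(y)-\phi(x)$, which is the convention under which the criterion $w(vv')+\supsum^N(v')-\phi_H(v)$ is the right one, the profile of $u_0\cdots u_k$ inside $H$ is $\phi_H(u_0)-\phi_H(u_k)$ plus the modified weights, not the reverse. With your sign the chain in (a) does not close. In (b), maximality of $vv'$ yields a \emph{lower} bound $\phi_H(u)\geq \phi_H(v)-w(vv')-\supsum^N(v')+w(uu'')+\supsum^N(u'')$, not the upper bound you wrote.

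The genuine gap is the one you flag yourself. In (b) you never prove that $\phi_H(v)-\phi_H(u)\leq w(vv')+\supsum^N(v')$ for every $u\in H$ the play visits. Without this you control neither plays that stay in $H$ forever nor the prefixes before an escape. Maximality gives the inequality only at Max-vertices $u$ owning an escape $uu''$ with $w(uu'')+\supsum^N(u'')\geq 0$. Knowing that the profile's peak is attained at a Max-vertex does not suffice. That is true, since Min-vertices of $H$ have only edges of weight $\geq 0$, but the peak vertex's nonnegative edge may stay in $H$, and its escapes to $F$, if any, may be negative. Your existence argument produces some escape, but not the nonnegative one you would need here, and your fallback strategy for Min is not developed.

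The missing idea is the paper's auxiliary lemma on $\phi_H$-minimal vertices.
\begin{enumerate}
\item Since $H=H^-$ and $\phi_H$ is reducing, $P_{\phi_H}=\varnothing$.
\item If $u$ minimises $\phi_H$, every edge $uu'$ in $H$ of modified weight $\leq 0$ has $w(uu')=w_{\phi_H}(uu')+\phi_H(u)-\phi_H(u')\leq 0$.
\item Following weight-zero edges preserves minimality. Because there are no zero cycles, this must stop at a vertex $u^*\in N_H$ with $\phi_H(u^*)=\min_H\phi_H$.
\item Min-vertices of $H$ are not in $N_H$, so $u^*\in\Vmax$ and all its edges inside $H$ are negative. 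Since $u^*\notin N$, it has an edge $u^*u''$ of weight $\geq 0$ into $F$.
\item Applying maximality to this single edge gives $w(vv')+\supsum^N(v')-\phi_H(v)\geq w(u^*u'')+\supsum^N(u'')-\phi_H(u^*)\geq-\phi_H(u^*)\geq -\phi_H(u)$ for all $u\in H$.
\end{enumerate}
This is exactly the uniform bound you need. With it, Min's plain reducing strategy bounds both plays that stay in $H$ and plays that escape.

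A variant closer to your peak idea also works. From any visited $u$, follow edges of $H$ with $w\geq 0$ and $w_{\phi_H}\leq 0$. Such edges exist at every vertex until you reach a Max-vertex with a nonnegative edge into $F$. Along the way $\phi_H$ weakly decreases, and the path cannot close a cycle, since that cycle would have weight $0$.
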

\begin{proof}[Proof of Lemma~\ref{lem:main1}a]
    Since Min can play the edge $vv'$ and then play optimally from $v'$, we have 
    \[
        \supsum^N(v) \leq w(vv') + \supsum^N(v'),
    \]
    because $w(vv') \geq 0$ (this is because $v \notin N$ since $N \subseteq F$) and therefore the right hand side is $\geq 0$, so Lemma~\ref{lem:technical-sup} applies.
    So we focus on the other inequality.
    We let Max play the strategy, over $H^+$, that uses only edges towards $H^+$ with $\phi_H$-modified weight $\geq 0$, and over $F$, he plays optimally. 
    Starting from $v$, this forces the play to remain within $F \cup H^+$, so we do not need to define Max's strategy over $H^-$.
    We claim that this strategy achieves the wanted value from $v$.

    Indeed, if the play remains in $H^+$, then only edges with $\phi_H$-modified weight $\geq 0$ are visited, and since there are no zero cycles, the profile of modified weights goes to $\infty$, and hence this is also true with the original weights.
    Otherwise, the play remains in $H^+$ until visiting an edge $uu'$ from $H^+ \cap \Vmin$ to $F$.
    Now the prefix of the play from $v$ to $u$ has $\phi_H$-modified weight $\geq 0$, so we get
    \[
        \begin{array}{lcl}
        \text{value of play from $v$} &\geq& \text{value of prefix } + w(uu') + \supsum^N(u') \\
        &=& \text{modified weight of prefix} + \phi_H(v) - \phi_H(u) + w(uu') + \supsum^N(u') \\
        &\geq & \phi_H(v) + w(uu') + \supsum^N(u') - \phi_H(u) \\
        & \geq & \phi_H(v) + w(vv') + \supsum^N(v') - \phi_H(v) = w(vv') + \supsum^N(v'),
        \end{array}
    \]
    as required.
\end{proof}

Next is a technical statement that will be used to prove Lemma~\ref{lem:main1}b.

\begin{lemma}\label{lem:technical}
    Let $H$ be a game such that the winning region of Max is empty, and let $\phi$ be a reducing potential.
    There exists a vertex in $N_H$ such that $\phi(v)$ is minimal.
\end{lemma}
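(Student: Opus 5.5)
The plan is to look at the set $M$ of vertices where $\phi$ attains its minimum, and to show that if $M$ avoided $N_H$ we could build a cycle of weight zero inside $M$. That would contradict the standing assumption that there are no zero cycles. Throughout, $N_H$ refers to the zones of $H$ with the original weights $w$, and the modified weights are $w_\phi(vv') = w(vv') + \phi(v') - \phi(v)$.

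\emph{Step 1: read off what reducedness gives.} Since Max's winning region in $H$ is empty and $\phi$ is reducing, every vertex lies in $ZN_\phi$ (the partition $ZN_\phi, ZP_\phi$ coincides with the winning regions). Reducedness of $H_\phi$ then says that from every vertex Min can force the first edge to have modified weight $\leq 0$. Concretely:
\begin{itemize}
\item a Min-vertex $v$ has some edge $vv'$ with $w_\phi(vv') \leq 0$;
\item a Max-vertex $v$ has $w_\phi(vv') \leq 0$ for all of its edges.
\end{itemize}

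\emph{Step 2: specialise to $v \in M$.} For $v \in M$ we have $\phi(v) - \phi(v') \leq 0$ for every $v'$. So any edge with $w_\phi(vv') \leq 0$ satisfies $w(vv') \leq \phi(v) - \phi(v') \leq 0$. Moreover, $w(vv') = 0$ forces $\phi(v') = \phi(v)$, that is, $v' \in M$.

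\emph{Step 3: assume no vertex of $M$ lies in $N_H$, and derive a contradiction.}
\begin{itemize}
\item Take a Min-vertex $v \in M \setminus N_H$. Its immediately optimal weight is $\geq 0$, so all its edges have $w \geq 0$. The edge provided by Step 1 has $w \leq 0$ by Step 2, so it has $w = 0$ and leads into $M$.
\item Take a Max-vertex $v \in M \setminus N_H$. Its maximal outgoing weight is $\geq 0$, but by Steps 1 and 2 all its edges have $w \leq 0$. Hence some edge has $w = 0$, and by Step 2 it leads into $M$.
\end{itemize}
So every vertex of $M$ has an outgoing edge of original weight $0$ that stays inside $M$. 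Following these edges from any vertex of $M$ (which is nonempty because $H$ is nonempty) eventually closes a cycle, and that cycle has weight $0$. This contradicts the no-zero-cycle assumption, so some $v \in N_H$ minimises $\phi$.

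The only delicate point is the bookkeeping in Step 2. One must use the convention $w_\phi(vv') = w(vv') + \phi(v') - \phi(v)$, which is the convention under which cycle weights are preserved. One must also check that minimality of $\phi(v)$ is exactly what turns the inequality $w_\phi \leq 0$ into both $w \leq 0$ and the equality case $v' \in M$. Once that is in place, the cycle argument in Step 3 is routine.
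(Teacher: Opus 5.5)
Your proof is correct and is essentially the paper's argument. Reducedness together with the empty Max region gives $P_\phi=\varnothing$, and minimality of $\phi$ turns modified weights $\leq 0$ into original weights $\leq 0$, with equality forcing a step inside the set where $\phi$ is minimal; the absence of zero cycles then yields the contradiction. The only difference is presentation: you fix a zero-weight successor inside the minimiser set and close a cycle, while the paper follows zero-weight edges as a process that must terminate.
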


\begin{proof}
    Since $\phi$ is reducing and the winning region of Max is empty, we have $P_\phi=\varnothing$, i.e.~every Min vertex has an outgoing edge with modified weight $\leq 0$ and every edge outgoing from a Max vertex has modified weight $\leq 0$.
    By contradiction, assume vertices $v$ in $N_H$ do not have minimal $\phi(v)$ value, and let $u$ be a vertex such that $\phi(u)$ is minimal.
    We proceed as follows.
    \begin{itemize}
        \item If $u\in \Vmin$.
        Then let $uu'$ be an edge of $\phi$-modified weight $\leq 0$.
        We have 
        \[
            w(uu') = \underbrace{w_\phi(uu')}_{\leq 0} + \underbrace{\phi(u) - \phi(u')}_{\leq 0} \leq 0. \tag{*}\label{eq}
        \]
        If $w(uu')<0$ then $u \in N_G$ which is a contradiction.
        Otherwise, $w(uu')=0$ so $\phi(u')=\phi(u)$ and we continue the process from $u'$.
        \item If $u \in \Vmax$.
        Then all edges $uu'$ have $\phi$-modified weight $\leq 0$ and satisfy~(\ref{eq}).
        If they all satisfy $w(uu')<0$ we have $u \in N_G$ and we are done.
        Otherwise, there is an edge $uu'$ with $w(uu')=0$ thus $\phi(u')=\phi(u)$ and we continue from $u'$.
    \end{itemize}
    The above process must stop since there are no zero cycles, which proves the lemma.
\end{proof}

We are now ready for the next proof.

\begin{proof}[Proof of Lemma~\ref{lem:main1}b]
    We first prove the claim that if $H^+$ is empty and $H^-$ is not, then there is an edge from $H^+ \cap \Vmax$ to $F$.
    In fact, we prove a stronger claim that will be useful later on: there is a vertex $v \in \Vmax \cap N_H$ (i.e. all edges from $v$ to $H=H^+$ are $<0$) with a $\geq 0$ edge towards $F$.
    Indeed, $H^-$ is non-empty therefore $N_H$ is non-empty and since $N_G \subseteq F$, all Min vertices in $H$ have only $\geq 0$ outgoing edges and thus cannot belong to $N_H$. 
    Hence $N_H$ is exactly comprised of Max vertices whose edges towards $H$ are all $<0$.
    Moreover, a vertex in $N_H$ must have a $\geq 0$ outgoing edge otherwise it would belong to $N_G$, and such an edge can only be towards $F$.

    Now we prove the claim that if $vv'$ maximises $w(vv') + \supsum^N(v') - \phi_H(v)$ from $H \cap \Vmax$ to $F$, then $\supsum^N(v)=\supsum^N(v') + w(vv')$.
    Since Max can play the edge $vv'$ and then play optimally from $v'$, we have $\supsum^N(v) \geq \supsum^N(v')+w(vv')$: this is obvious if the right hand side is $\leq 0$, and otherwise follows from Lemma~\ref{lem:technical-sup}.
    So we focus on the other inequality.
    
    We now apply Lemma~\ref{lem:technical} to $H$: there is a vertex $u$ in $N_H \subseteq \Vmax$ for which $\phi_H(u)$ is minimal (among all vertices in $H$).
    Recall that by the maximality assumption of the lemma, it holds for all edges $uu'$ with $u' \in F$ that
    \[
        \begin{array}{l}
        w(vv') + \supsum^N(v') - \phi_H(v) \geq w(uu') + \supsum^N(u') - \phi_H(u),
        \end{array}
    \]
    and there is an edge $uu'$ such that $w(uu') \geq 0$ (and also it always holds that $\supsum^N(u')\geq 0$), therefore we have
    \[
        \begin{array}{l}
        w(vv') + \supsum^N(v') - \phi_H(v) \geq - \phi_H(u).
    \end{array} \tag{*}\label{eq2}
    \]

    We let Min play the strategy, over $H^-$, that uses only edges towards $H^-$ with $\phi_H$-modified weight $\leq 0$, and over $F$, she plays optimally.
    We claim that this strategy achieves the wanted value from $v$.

    Indeed, while the play remains in $H^-$, only edges with $\phi_H$-modified weight $\leq 0$ are visited, and therefore for a prefix of the play ending in some vertex $v' \in H^-$ we get
    \[
        \begin{array}{lcl}
        \text{weight of the prefix} &=& \underbrace{\text{modified weight of the prefix}}_{\leq 0} + \phi_H(v) \underbrace{- \phi_H(v')}_{\leq -\phi_H(u)} \\
        &\leq & \phi_H(v) - \phi_H(u) \underbrace{\leq}_{\text{by~(\ref{eq2})}} w(vv') + \supsum^N(v'),
        \end{array}
    \]
    thus the value of the play is $\leq w(vv') + \supsum^N(v')$ as required.
    (We free the symbol $u$ for the remainder of the proof.)
    
    If however the play exits $H^-$ towards $F$, then it must be via an edge $uu'$ with $u$ in $\Vmax$, so the play is of the form:
    \[
        \underbrace{v=h_0 \re{} h_1 \re{} \dots \re{} h_n =u}_{\text{inside } H^-} \re{} \underbrace{u'}_{\text{in } F} \re{} \dots,
    \]
    where the suffix of the play starting from $u'$ follows the Min strategy which is optimal from $u'$ therefore has value $\leq \supsum^N(u')$.
    Therefore the value of the play is less or equal to
    \[
        \max\big[\max_i (\text{weight of prefix up to $h_i$}), (\text{weight of prefix up to $u'$}) + \supsum^N(u')\big],
    \]
    and we have already argued above that the weight of the prefix up to $h_i$ is $\leq w(vv') + \supsum^N(v')$.
    To upper bound the other term, we proceed similarly as in Lemma~\ref{lem:main1}a: the prefix of the play up to $u$ has $\phi_H$-modified weight $\leq 0$, so we get
    \[
        \begin{array}{lcl}
       && \text{weight of prefix up to $u'$} + \supsum^N(u') \\&\leq& \text{weight of prefix up to $u$} + w(uu') + \supsum^N(u') \\
        &=& \text{modified weight of prefix up to $u$} + \phi_H(v) - \phi_H(u) + w(uu') + \supsum^N(u') \\
        &\leq & \phi_H(v) + w(uu') + \supsum^N(u') - \phi_H(u) \\
        & \leq & \phi_H(v) + w(vv') + \supsum^N(v') - \phi_H(v) = w(vv') + \supsum^N(v'),
        \end{array}
    \]
    concluding the proof.
\end{proof}



This last easy lemma justifies the termination of the algorithm.

\begin{lemma}\label{lem:main4}
    Let $G$ be a game over which $\supsum^N$ is finite, and let $G'$ be obtained by applying the corresponding potential reduction.
    Then both $N_{G'}$ and $P_{G'}$ are contained in $N$.
\end{lemma}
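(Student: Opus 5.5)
Write $\phi=\supsum^N$, which is finite everywhere by assumption, and $w_\phi(vv')=w(vv')+\phi(v')-\phi(v)$. The claim is equivalent to: every vertex $v\notin N$ lies in $Z_{G'}$. So we fix $v\notin N$ and show that its immediately optimal edge in $G'$ has modified weight exactly $0$. The tool is a one-step Bellman equation for $\phi$ at vertices outside $N$, obtained from Lemma~\ref{lem:technical-sup} together with positional determinacy for $\supsum^N$.

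\textbf{Step 1: the Bellman equation.} Since $v\notin N$, the game does not stop at $v$, and
\[
\phi(v)=\max\Big(0,\ \mathrm{opt}_{vv'}\big(w(vv')+\phi(v')\big)\Big),
\]
where $\mathrm{opt}$ is $\min$ if $v\in\Vmin$ and $\max$ if $v\in\Vmax$. This follows by unfolding one step and using optimal positional strategies from the successors. Lemma~\ref{lem:technical-sup} handles the case where the inner quantity is $\geq 0$; otherwise the $\max$ with the empty prefix applies.

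\textbf{Step 2: the inner optimum is $\geq 0$, so the $\max(0,\cdot)$ is inactive.} Take $v\in\Vmin\setminus N$. Every edge out of $v$ has $w\geq 0$, and $\phi\geq 0$, so $w(vv')+\phi(v')\geq 0$ for every edge. Take $v\in\Vmax\setminus N$. Some edge out of $v$ has $w\geq 0$, so the maximum is $\geq 0$. In both cases
\[
\phi(v)=\mathrm{opt}_{vv'}\big(w(vv')+\phi(v')\big).
\]
Rewriting, $\mathrm{opt}_{vv'}w_\phi(vv')=0$, which says exactly that $v\in Z_{G'}$.

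\textbf{Step 3: conclusion.} Step 2 shows $V\setminus N\subseteq Z_{G'}$. Hence $N_{G'}$ and $P_{G'}$ are both contained in $N$.

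The main obstacle is making Step 1 rigorous. One must check that the value of $\supsum^N$ at $v$ really decomposes as stated. The inequality in one direction comes from the player choosing the optimal first edge and then playing optimally; the other comes from the opponent doing the same. Finiteness of $\phi$ and positional determinacy for $\supsum^X$ (the theorem stated in the preliminaries) ensure the optimal strategies exist and the values are well defined. The remaining care is the boundary case where the inner quantity is negative, which Step 2 rules out.
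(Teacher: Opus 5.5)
Your proof is correct and follows essentially the same route as the paper: both use the one-step optimality equation for $\supsum^N$ at vertices outside $N$ to show that the optimal modified edge weight at each such vertex is exactly $0$, so that $V \setminus N \subseteq Z_{G'}$. Your Step~2 spells out what the paper leaves implicit, namely why the truncation at $0$ is inactive. At Min vertices outside $N$ every edge has weight $\geq 0$, and at Max vertices outside $N$ some edge does. This is exactly the hypothesis needed to invoke Lemma~\ref{lem:technical-sup}.
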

\begin{proof}
    Let $v$ be a Min vertex in $G'$ which is not in $N$.
    Then for every outgoing edge $vv'$ we have $\supsum^N(v) \geq \supsum^N(v') + w(vv')$ and this is an equality for at least one outgoing edge.
    Therefore, the $\supsum^N$-modified weights are all $\geq 0$ and at least one of them is $=0$, so $v$ belongs to $Z_{G'}$.
    The proof is analogous for Max vertices.
\end{proof}

\section{Pseudocode}\label{sec:pseudocode}

\SetKwFunction{FComputeZones}{ComputeZones}
\SetKwFunction{FReduceGame}{ReduceGame}
\SetKwFunction{FPotentialReduction}{PotentialReduction}
\SetKwFunction{FAttractMaxAndReduce}{AttractMaxAndReduce}
\SetKwFunction{FAttractMinAndReduce}{AttractMinAndReduce}
\SetKwFunction{FBacktrackAllPaths}{BacktrackAllPaths}
\SetKwFunction{FBacktrackInitialisationMin}{BacktrackInitialisationMin}
\SetKwFunction{FBacktrackInitialisationMax}{BacktrackInitialisationMax}
\SetKwFunction{FReduced}{Reduced}
\SetKwFunction{FTrue}{True}
\SetKwFunction{FFalse}{False}
\SetKw{KwBreak}{break}
\SetKw{KwContinue}{continue}
\SetKw{Repeat}{do}
\SetKw{Until}{until}
\SetKwProg{Fn}{Function}{}{}

Here is some pseudocode for our algorithm. All games are assumed without zero cycles.
Given maps $f:A \to X$ and $g:B \to X$, where $A$ and $B$ are disjoint sets, we let $f \uplus g$ denote the map from $A \cup B$ to $X$ which coincides with $f$ on $A$ and with $g$ on $B$.

\begin{algorithm}[H]
    \DontPrintSemicolon

    \Fn{\FReduceGame{$G$}}{
        \KwIn{A game $G=(V,E,w)$}
        \KwOut{A tuple $(G^-,G^+,\phi)$ where $\phi$ is a reducing potential and $G^-,G^+$ are the corresponding winning regions}
        $(N,P,ZN,ZP) \gets \FComputeZones(G)$\;
        \If{$\FReduced(G,N,P,ZN,ZP)$}{
            \Return{(ZN,ZP,0)}
        }
        \If{$|N| \leq |P|$\label{line:choice}}{
            $F \gets N$\; \label{line:initialisation-of-F}
            forall $v$, $\supsum^N(v) \gets 0$\;
            \While{$\FTrue$}{
                $\FBacktrackAllPaths(G,F,\supsum^N)$ \; \label{line:backtrack-all-paths}
                $H \gets G \setminus F$\;
                $(H^-,H^+,\phi_H) \gets \FReduceGame(H)$\; \label{line:main-recurse}
                \If{$H^+ \neq \varnothing$}{
                    \If{there is an edge from $H^+ \cap \Vmin$ to $F$}{
                        let $vv'$ be such an edge minimising $w(vv') + \supsum^N(v') - \phi_H(v)$\; \label{line:start-H+}
                        $\supsum^N(v) \gets w(vv') + \supsum^N(v')$\;
                        $F \gets F \cup \{v\}$\; \label{line:add-v-H+}
                        \KwContinue\;
                    }
                    \Else{
                        $(A,\phi_A) \gets \FAttractMaxAndReduce(G,H^+,\phi_H)$\;
                        $G' = G \setminus A$\;
                        $(G'^+,G'^-,\phi') \gets \FReduceGame(G')$\;
                        $\delta = - \min_{v \in G', v' \in A} w(vv') - \min_{v' \in A} \phi_A(v') + \max_{v \in G'} \phi'(v)$\;
                        \Return{$(G'^-, G'^+ \cup A, (\phi_A + \delta) \uplus \phi')$}\;
                    }
                }
                \Else{
                    \If{$H^- \neq \varnothing$}{
                        let $vv'$ be an edge maximising $w(vv') + \supsum^N(v') - \phi_H(v)$ from vertices in $H \cap \Vmax$ to $F$\; \label{line:start-H-}
                        $\supsum^N(v) \gets w(vv') + \supsum^N(v')$\;
                        $F \gets F \cup \{v\}$\; \label{line:add-v-H-}
                        \KwContinue\;
                    }
                    \Else{
                        $G' \gets \FPotentialReduction(G,\supsum^N)$\;\label{line:potential-reduction}
                        \Return{$\FReduceGame(G')$} \;
                    }
                }
            }
        }
        \Else{\tcc{Proceed symmetrically}} 
    }\caption{Main recursive procedure}\label{alg:main}
    \end{algorithm}

\newpage

Here is a pseudocode for the three (similar) backtracking subprocedures, determining if a game is reduced, and computing a potential reduction.
Each of them runs in time $O(m)$.
We write $v^{-1} E = \{v' \in V \mid vv' \in E\}$ and $Ev^{-1}=\{v' \in V \mid v'v \in E\}$.

\begin{algorithm}[H]
    \DontPrintSemicolon
    \Fn{\FComputeZones{$G$}}{
        \KwIn{A game $G=(V,E,w)$}
        \KwOut{A tuple $(N,P,ZN,ZP)$ corresponding to the respective zones in the game.}
        $N \gets \{v \in \Vmin \mid \exists v' \in v^{-1} E, w(vv') <0\} \cup \{v \in \Vmax \mid \forall v'\in v^{-1} E, w(vv')<0\}$\;
        $P \gets \{v \in \Vmin \mid \forall v'\in v^{-1} E, w(vv') >0\} \cup \{v \in \Vmax \mid \exists v'\in v^{-1} E, w(vv')>0\}$\;
        $Z \gets V \setminus (N \cup P)$\;
        \ForAll{$v \in Z \cap \Vmax$}{
            $\esc[v] \gets |\{v' \in v^{-1} E \mid w(vv') = 0\}|$\;    
        }
        $ZN \gets \varnothing$\;
        $Q \gets N$\;
        \While{$Q \neq \varnothing$}{
            take some $v \in Q$\;
            $Q \gets Q \setminus \{v\}$\;
            $ZN \gets ZN \cup \{v\}$\;
            \ForAll{$v' \in Ev^{-1} \setminus (P \cup ZN)$}{
                \If{$v' \in \Vmin$ and $w(v'v) = 0$}{$Q \gets Q \cup \{v'\}$}
                \Else{
                    $\esc(v') \gets \esc(v')-1$\;
                    \If{$\esc(v')=0$}{$Q \gets Q \cup \{v'\}$}
                }
            }
        }
        \Return{$(N,P,ZN,V\setminus ZN)$}
    }
\end{algorithm}

\begin{algorithm}[H]
    \DontPrintSemicolon
    \Fn{\FReduced{$G,N,P,ZN,ZP$}}{
        \KwIn{A game $G=(V,E,w)$ together with its zones $N,P,ZN$ and $ZP$}
        \KwOut{$\FTrue$ if $G$ is reduced and $\FFalse$ otherwise.}
        $b \gets \FTrue$ \;
        $b \gets b  \text{ and } [\forall v \in \Vmin \cap N, \exists v' \in v^{-1}E \cap ZN, w(vv') \leq 0$] \;
        $b \gets b  \text{ and } [\forall v \in \Vmax \cap N, \forall v' \in v^{-1}E, v' \in ZN $] \;
        $b \gets b  \text{ and } [\forall v \in \Vmax \cap P, \exists v' \in v^{-1}E \cap ZP, w(vv') \geq 0$] \;
        $b \gets b  \text{ and } [\forall v \in \Vmin \cap P, \forall v' \in v^{-1}E, v' \in ZP$] \;
        \Return{$b$}
    }
\end{algorithm}

\begin{algorithm}[H]
    \DontPrintSemicolon
    \Fn{\FBacktrackAllPaths{$G,F,\phi$}}{
        \KwIn{A game $G=(V,E,w)$, a subset $F$ of vertices such that $N_G \subseteq F$, and a potential $\phi$ which coincides with $\supsum^{N_G}$ over $F$.}
        \KwOut{Updates $F$ and $\phi$ in place to incorporate vertices whose paths all lead to $F$ to $F$ so that $\phi$ still coincides with $\supsum^{N_G}$ over $F$.}
        \ForAll{$v \in V \setminus F$}{
            $\esc[v] \gets |\{v' \in v^{-1} E \mid v' \notin F\}|$\;
        }
        $Q \gets \{v \in V \setminus F \mid \esc(v)=0\}$\;
        \While{$Q \neq \varnothing$}{
            take some $v \in Q$\;
            $Q \gets Q \setminus \{v\}$\;
            $F \gets F \cup \{v\}$\;
            \If{$v \in \Vmin$}{
                $\phi(v)=\min_{v' \in v^{-1}E} w(vv') + \phi(v')$\;
            }
            \Else{
                $\phi(v)=\max_{v' \in v^{-1}E} w(vv') + \phi(v')$\;
            }
            \ForAll{$v' \in Ev^{-1} \setminus F$}{
                $\esc(v') \gets \esc(v')-1$\;
                \If{$\esc(v')=0$}{$Q \gets Q \cup \{v'\}$}
            }
        }
        \Return{}
    }
\end{algorithm}

\begin{algorithm}[H]
    \DontPrintSemicolon
    \Fn{\FAttractMaxAndReduce{$G,T,\phi_T$}}{
        \KwIn{A game $G=(V,E,w)$, a target set of vertices $T$ and a potential $\phi_T$ which is positively reducing over $T$.}
        \KwOut{A tuple $(A,\phi_A)$ where $A \supseteq T$ is the set of vertices such that Max can force the game to reach $T$, and $\phi_A$ is positively reducing over $A$. }
        $A \gets T$\;
        $\phi_A \gets \phi_T$\;
        \ForAll{$v \in (V \setminus T) \cap \Vmin$}{
            $\esc[v] \gets |\{v' \in v^{-1} E \mid v' \notin T\}|$\;
        }
        $Q \gets \{v \in (V \setminus T)  \cap \Vmin \mid \esc(v)=0\} \cup \{v \in (V \setminus T) \cap \Vmax \mid \exists z' \in v^{-1}E \cap T\}$\;
        \While{$Q \neq \varnothing$}{
            take some $v \in Q$\;
            $Q \gets Q \setminus \{v\}$\;
            $A \gets A \cup \{v\}$\;
            \If{$v \in \Vmin$}{
                $\phi(v)=\min_{v' \in v^{-1}E} w(vv') + \phi_A(v')$\;
            }
            \Else{
                take some $v' \in v^{-1}E \cap A$\;
                let $\phi_A(v)= w(vv') + \phi_A(v')$\;
            }
            \ForAll{$v' \in Ev^{-1}$}{
                \If{$v' \in \Vmin$}{
                    $\esc(v') \gets \esc(v')-1$\;
                    \If{$\esc(v')=0$}{$Q \gets Q \cup \{v'\}$}
                }
                \Else{
                    $Q \gets Q \cup \{v'\}$\;
                }
            }
        }
        \Return{$(A,\phi_A)$}
    }
 \end{algorithm}

\begin{algorithm}[H]
    \DontPrintSemicolon
    \Fn{\FAttractMinAndReduce{$G,T,\phi_T$}}{
        \tcc{Symmetric}
    }
\end{algorithm}

\begin{algorithm}[H]
    \Fn{\FPotentialReduction($G,\phi$)}{
        \KwIn{A game $G=(V,E,w)$ and a potential $\phi$.}
        \KwOut{The game $G'=(V,E,w_\phi)$ obtained by performing the corresponding potential reduction.}
        \ForAll{$vv' \in E$}{
            $w_\phi(vv') \gets w(vv') + \phi(v') = \phi(v)$\;
        }
        \Return{$(V,E,w_\phi)$}
    }
\end{algorithm}

\section{Optimisations and variants}\label{sec:optimisations-and-variant}

We propose some optimisations and variants for the algorithm.

\subsection{Optimisation: better initialisation of $F$}\label{sec:optim1}

Given a game $G$, consider the set $S$ of vertices from which Min can ensure that $N$ will be visited before any edge with weight $>0$ is seen.
Stated differently, $S$ is the largest set of vertices such that from $S$, Min can ensure that either only edges with weight $\leq 0$ are seen forever, or such edges are seen until $N$ is reached. (Note that $N \subseteq S$.)
Clearly, $S$ can be computed in polynomial time (by a backtracking procedure similar to the ones above), and moreover, $\supsum^N$ takes value $0$ on $S$.
The first optimisation replaces line~\ref{line:initialisation-of-F} in Algorithm~\ref{alg:main} by $F \gets \FBacktrackInitialisationMin(G,N,P)$ where this subprocedure computing $S$ can be implemented as follows.

\begin{algorithm}[H]
    \DontPrintSemicolon
    \Fn{\FBacktrackInitialisationMin{$G,N,P$}}{
        \KwIn{A game $G=(V,E,w)$ and its zones $N,P$.}
        \KwOut{The set of vertices $S$ from which Min can ensure that $N$ is seen before any $>0$ edge.}
        $A \gets \varnothing$\;
        \ForAll{$v \in \Vmin \setminus (P \cup N)$}{
            $\esc[v] \gets |\{v' \in v^{-1} E \mid v' \notin P \text{ and } w(vv') = 0\}|$\;
        }
        $Q \gets P$\;
        \While{$Q \neq \varnothing$}{
            take some $v \in Q$\;
            $Q \gets Q \setminus \{v\}$\;
            $A \gets A \cup \{v\}$\;
            \ForAll{$v' \in Ev^{-1} \setminus N$}{
                \If{$v' \in \Vmin$ and $w(vv') = 0$}{
                    $\esc(v') \gets \esc(v')-1$\;
                    \If{$\esc(v')=0$}{$Q \gets Q \cup \{v'\}$}
                }
                \Else{
                    $Q \gets Q \cup \{v'\}$\;
                }
            }
        }
        \Return{$V \setminus A$}
    }
 \end{algorithm}

 Naturally, a symmetrical procedure $\FBacktrackInitialisationMax$ is employed in the dual case.

\subsection{Optimisation: fixing many vertices at once}\label{sec:optim2}

The next optimisation is more involved.
The idea is to enhance lines~\ref{line:start-H+}--\ref{line:add-v-H+} and lines~\ref{line:start-H-}--\ref{line:add-v-H-} by an additional backtracking procedure in order to potentially add several vertices to $F$ at once, instead of only one.
This additional polynomial-time computation potentially removes many (expensive) recursive calls.

More precisely, instead of lines~\ref{line:start-H+}--\ref{line:add-v-H+}, we proceed as follows.
Assume that there is an edge from $H^+ \cap \Vmin$ to $F$ and let $m$ denote the minimal value of $w(vv') + \supsum^N(v') - \phi_H(v)$ for $vv'$ such an edge.
We say that an edge $vv'$ is a good escape if it is from $H^+$ to $F$ and $w(vv') + \supsum^N(v') - \phi_H(v) \leq m$ (note that an edge from $H^+ \cap \Vmax$ to $F$ could also be a good escape).
Then we compute the set $S$ of vertices in $H^+$ from which in $G$, Min can ensure that a good escape is seen before any edge which has either $\phi_H$-modified weight $>0$ or leaves $H^+$. 
Stated differently, $S$ is the largest set of vertices $\subseteq H^+$ such that from $S$, Min can ensure that the play only visits edges with $\phi_H$-modified weight $\leq 0$ while it remains in $H^+$, and either stays forever in $H^+$ or exits with a good escape.
Then for every vertex $v$ in $S$, the lemma below claims that $\supsum^N(v)=m+\phi_H(v)$, so we may fix $\supsum^N(v)$ and add all of $S$ to $F$.
For lines~\ref{line:start-H-}--\ref{line:add-v-H-} we proceed similarly.

We therefore establish an extension of Lemma~\ref{lem:main1} which is as follows.

\begin{lemma}
    Let $G$ be a game, let $F$ be a set of vertices containing $N$ and such that $\supsum^N$ is finite over $F$, let $H$ be obtained by removing $F$ from $G$ and assume that $H$ is a subgame.
    Let $\phi_H$ be a reducing potential for $H$ and let $H^+$ and $H^-$ denote the winning regions.
    \begin{enumerate}[a.]
        \item\label{item:H+} Assume that there is an edge from $H^+ \cap \Vmin$ to $F$ and let $m$ denote the minimal value of $w(vv') + \supsum^N(v') - \phi_H(v)$ for $vv'$ such an edge.
        We say that an edge $vv'$ is a good escape if it is from $H^+$ to $F$ and $w(vv') + \supsum^N(v') - \phi_H(v) \leq m$.
        Let $S$ be the set of vertices in $H^+$ from which in $G$, Min can ensure that a good escape is seen before any edge which has either $\phi_H$-modified weight $>0$ or leaves $H^+$.
        Then for every vertex $v$ in $S$ we have $\supsum^N(v)=m+\phi_H(v)$.

        \item\label{item:H-} Assume that $H^+$ is empty, and that $H^-$ is not.
        Then there is an edge $vv'$ from $v \in H^- \cap \Vmax$ to $v' \in F$ and let $m$ denote the maximal value of $w(vv') + \supsum^N(v')-\phi_H(v)$ for $vv'$ such an edge.
        Say that an edge $vv'$ is a good escape if it is from $H^-$ to $F$ and $w(vv')+ \supsum^N(v') - \phi_H(v) \geq m$.
        Let $S$ be the set of vertices in $H^-$ from which in $G$, Max can ensure that a good escape is seen before any edge which has either $\phi_H$-modified weight $<0$ or leaves $H^-$.
        Then for every vertex $v$ in $S$ we have $\supsum^N(v)=m+\phi_H(v)$.
    \end{enumerate}
\end{lemma}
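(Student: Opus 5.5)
In both parts, $\supsum^N(v)=m+\phi_H(v)$ will follow from two matching inequalities. Each comes from exhibiting a strategy for one player and bounding the profile with the same telescoping computation as in the proof of Lemma~\ref{lem:main1}. The new ingredient is that the starting vertex $v\in S$ need not be the endpoint of the optimal escape. The quantity $m+\phi_H(v)$ is the natural candidate because a prefix inside $H$ from $v$ to $u$ has weight equal to its $\phi_H$-modified weight plus $\phi_H(v)-\phi_H(u)$.

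\textbf{Part a, upper bound.} Min plays the strategy witnessing $v\in S$ and then plays optimally once $F$ is reached. Along the way she only sees edges of modified weight $\le 0$ inside $H^+$, and either stays in $H^+$ forever or exits through a good escape $uu'$. While inside $H^+$, a prefix ending at $h$ has weight at most $\phi_H(v)-\phi_H(h)$. We need this to be at most $m+\phi_H(v)$, which amounts to $-\phi_H(h)\le m$. I would try to derive this from the definition of $m$ together with $\supsum^N\ge 0$ and $w\ge 0$ on Min escapes (Min vertices of $H$ are not in $N$). This part is delicate: in Lemma~\ref{lem:main1}a the upper bound used only the one-edge move, so it is cleaner to argue differently.

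Since the play stays in $H^+$ with modified weights $\le 0$, the modified profile is nonincreasing. I claim the play cannot remain in $H^+$ forever. Every cycle would have modified weight $\le 0$, hence $<0$ because there are no zero cycles, and Max can avoid that inside his winning region $H^+$. More precisely, $\phi_H$ is reducing, so $H^+=ZP$ and Max can force a $\ge0$ edge; combined with Min's nonpositive play, Min would build a negative cycle in $H^+$, contradicting that $H^+$ is Max-winning. So the play exits via a good escape $uu'$, and the telescoping bound gives
\[
\text{weight up to } u' + \supsum^N(u') \le \phi_H(v)-\phi_H(u)+w(uu')+\supsum^N(u') \le \phi_H(v)+m .
\]
Intermediate peaks need Lemma~\ref{lem:technical-sup}-style care. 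I would handle them by arguing backwards from the exit: the suffix value $m+\phi_H(h)$ at each $h$ is $\ge 0$ once the lower bound is known, so one can apply Lemma~\ref{lem:technical-sup} inductively along the finite prefix.

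\textbf{Part a, lower bound.} This is exactly the argument of Lemma~\ref{lem:main1}a. Max plays nonnegative modified edges inside $H^+$ and optimally in $F$. Staying forever gives $\infty$. Exiting through a Min edge $uu'$ gives at least $\phi_H(v)+w(uu')+\supsum^N(u')-\phi_H(u)\ge \phi_H(v)+m$ by minimality of $m$. Exits through Max edges do not occur because Max stays in $H^+$.

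\textbf{Part b.} This is dual. The upper bound copies the proof of Lemma~\ref{lem:main1}b: Min plays nonpositive modified edges inside $H^-$, Lemma~\ref{lem:technical} together with inequality~(\ref{eq2}) (with $m$ in place of the escape value) bounds the in-$H^-$ prefixes, and the exits are Max edges bounded via maximality of $m$. The lower bound uses the Max strategy from $S$: its nonnegative modified edges keep the profile at least $\phi_H(v)-\phi_H(h)$. The play cannot stay in $H^-$ forever with only nonnegative modified edges (that would force a positive cycle in Min's winning region), so it takes a good escape, yielding value at least $m+\phi_H(v)$.

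The main obstacle I expect is the upper bound in part a, that is, controlling intermediate peaks of the profile before the good escape. If the inductive use of Lemma~\ref{lem:technical-sup} fails, I would instead compare with Lemma~\ref{lem:main1}a applied at the escape vertex $u$, together with $\supsum^N(h)\le w+\supsum^N(\text{next})$ along Min's moves.
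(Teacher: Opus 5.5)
Your lower bound in (a) and both bounds in (b) are fine. For the lower bound in (a) and the upper bound in (b) your route is more direct than the paper's. Running the strategies from the proofs of Lemma~\ref{lem:main1}a and~b from an arbitrary start vertex $x$ gives $\supsum^N(x)\ge m+\phi_H(x)$ on all of $H^+$. Using Lemma~\ref{lem:technical}, which yields $m\ge-\min_H\phi_H$, it also gives $\supsum^N(x)\le m+\phi_H(x)$ on all of $H^-$. The paper instead adds an edge $u\re{\pm1/2+\phi_H(u)-\phi_H(v)}v$ and re-invokes Lemma~\ref{lem:main1}.

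The gap is the upper bound in (a), exactly where you expected it, and your proposed fix is circular. The lower bound $\supsum^N(h)\ge m+\phi_H(h)$, together with $\supsum^N(h)\ge 0$, says nothing about the sign of $m+\phi_H(h)$. Nonnegativity of $m+\phi_H(h)$ only follows from the upper bound $\supsum^N(h)\le m+\phi_H(h)$ that you are trying to prove.

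Nor can $-\phi_H(h)\le m$ be read off the definition of $m$, as Lemma~\ref{lem:technical} allows in (b); it fails on $H^+\setminus S$ in general. Take $u\in\Vmin$, $h\in\Vmax$ with $h\re{-5}u\re{6}h$, escapes $u\re{0}u'$ and $h\re{0}u'$ with $u'\in N$, $\phi_H(u)=0$ and $\phi_H(h)=-5$. Then $\phi_H$ reduces $H^+=\{u,h\}$, $m=0$, but $-\phi_H(h)=5$. So any valid argument must use the structure of $S$.

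Separately, your reason why the play cannot stay in $H^+$ forever is wrong. Against Min's fixed strategy, Max may cycle through cycles of negative modified weight, and this does not contradict $H^+$ being Max-winning. Under the literal (reachability) definition of $S$ this case is excluded anyway. In (b) the same slip is harmless, since such a play has value $+\infty$.

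The missing idea is that outside $N$ the one-step recursion for $\supsum^N$ involves no truncation at $0$. Fix Min's witnessing strategy $\sigma$ on $S$, optimal on $F$.
\begin{itemize}
\item A Min vertex $x\notin N$ has only edges of weight $\ge 0$. So by Lemma~\ref{lem:technical-sup}, $\supsum^N_\sigma(x)=w(xx')+\supsum^N_\sigma(x')$ for $xx'=\sigma(x)$.
\item A Max vertex $x\notin N$ has some edge of weight $\ge 0$. So $\supsum^N_\sigma(x)=\max_{x'}\bigl(w(xx')+\supsum^N_\sigma(x')\bigr)$, and this maximum is $\ge 0$.
\end{itemize}
The paper lets Max answer $\sigma$ with edges attaining these maxima, and telescopes along the resulting play from $v\in S$. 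This play cannot loop inside $S$: summing the equalities around the cycle forces $\supsum^N_\sigma=+\infty$ there (there are no zero cycles), whereas $\sigma$ guarantees a finite value from $S$. So the play leaves through a good escape $uu'$, and
\[
\supsum^N(v)\le\supsum^N_\sigma(v)=(\text{modified weight up to }u)+\phi_H(v)-\phi_H(u)+w(uu')+\supsum^N(u')\le\phi_H(v)+m.
\]
This is an exact identity for the value, not a bound on the profile of a single play. So intermediate peaks never need to be controlled, and $m+\phi_H(h)\ge 0$ on $S$ comes out as a by-product. An induction on the attractor rank of $S$ using the same one-step relations would also work.

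Your fallback applies the inequality only along Min's moves, and so misses exactly the Max step. There, the nonnegative edge of $x\notin N$ is what removes the truncation. The bound must also be propagated through a value-maximising successor (or through all successors, in the rank induction), not just through whatever edge one play happens to take.
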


\begin{proof}
    We start with the first item.
    We first prove that for every vertex $v \in S$ we have $\supsum^N(v) \leq m+\phi_H(v)$.
    Consider a Min strategy $\sigma$ which ensures that over $S$, a good escape is seen before any edge with $\phi_H$-modified weight $>0$ or any edge leaving from $H^+$ and from $F$, it is optimal for $\supsum^N$.
    We let $\supsum^N_\sigma(v)$ denote the value of $v$ assuming $\sigma$, i.e.~it is the supremum value of a play from $v$ consistent with $\sigma$; we say that a Max strategy $\tau$ is optimal against $\sigma$ if for every vertex $v$, the value of the unique play from $v$ consistent with both $\sigma$ and $\tau$ coincides with $\supsum^N_\sigma(v)$.

    We now prove the following claim: there is a Max strategy $\tau$ which is optimal against $\sigma$ and such that for every vertex $v \in S$, we have $\supsum^N_\sigma(v)=w(vv') + \supsum^N_\sigma(v')$, where $vv'=\tau(v)$ if $v \in \Vmax$ and $vv'=\sigma(v)$ if $v \in \Vmin$.
    Note that for $v \in \Vmin \cap S$, the wanted equality holds (regardless of $\tau$), since $w(vv') \geq 0$ (because $v \notin N$ since $N \subseteq F$) and thanks to Lemma~\ref{lem:technical-sup}.
    Consider a vertex $v \in \Vmax \cap S$.
    Since $v \notin N$ we know that it has an outgoing edge $vv'$ with weight $\geq 0$, and therefore thanks to Lemma~\ref{lem:technical-sup} we have $\supsum^N_\sigma(v) \geq w(vv') + \supsum^N_\sigma(v')$.
    \begin{itemize}
        \item If this edge is optimal, then this is an equality, so we set $\tau(v)=vv'$ and we are done.
        \item Otherwise, there is another optimal edge $vv''$, and therefore it satisfies 
        \[
            w(vv'') + \supsum^N_\sigma(v'') > w(vv') + \supsum^N_\sigma(v') \geq 0,
        \]
        hence by Lemma~\ref{lem:technical-sup} we have $\supsum^N_\sigma(v)=w(vv'') + \supsum^N_\sigma(v'')$ and we set $\tau(v)=vv''$.
    \end{itemize}
    Hence the claim is proved; we let $\tau$ denote such a Max strategy.

    Consider the unique play from a given vertex $v \in S$ which is consistent with both $\sigma$ and $\tau$; we should prove that its value is $\leq m + \phi_H(v)$.
    \begin{itemize}
        \item If the play remains forever in $S$. Note that the play is a lasso, i.e.~of the form
        \[
            v \re{} \dots \re{} u \re{} \dots \re{} u \re{} \dots \re{} u \re{} \dots.
        \]
        By the above claim, get that
        \[
            \supsum^N_\sigma(u) = \text{ weight of the cycle based on $u$ } + \supsum^N_\sigma(u).
        \]
        Since there are no zero cycles, this implies that $\supsum^N_\sigma(u) = +\infty$.
        However, since the play is consistent with $\sigma$ and does not visit a good escape, it must visit only edges with $\phi_H$-modified weight $\leq 0$, and therefore its value cannot be $\infty$.
        \item Otherwise, the play exits $S$, and since it is consistent with $\sigma$, it must exit via a good escape, say $uu'$.
        By the above claim we know that
        \[
            \begin{array}{lcl}
                \supsum^N_\sigma(v) &=& \text{ weight of prefix up to u } + w(uu') + \supsum_\sigma^N(u') \\
                &=& \text{ modified weight of prefix up to u } + \phi_H(v) - \phi_H(u) + w(uu') + \supsum^N(u') \\
                & \leq & \phi_H(v) - \phi_H(u) + w(uu') + \supsum^N(u') \leq \phi_H(v) +m,
            \end{array}
        \]
        since modified weights of the prefix are $\leq 0$ by definition of $\sigma$ and $\supsum_\sigma^N(u')=\supsum^N(u')$ holds because $\sigma$ is optimal from $F$.
    \end{itemize}
    This concludes the proof that for every $v \in S$ we have $\sup^N(v) \leq m + \phi_H(v)$.

    \vskip1em
    
    To get the other inequality, fix a vertex $u \in \Vmin \cap H^+$ which has a good escape as an outgoing edge $uu'$; we already know from Lemma~\ref{lem:main1}a that $\supsum^N(u)= w(uu') + \supsum^N(u')$ and by definition of $m$, this coincides with $\phi_H(u) +m$.
    Consider the game\footnote{We allow rational weights for the purpose of this proof.} $G'$ obtained from $G$ by adding the edge
    \[
        u \re{1/2 + \phi_H(u) - \phi_H(v)} v.
    \]
    Note that $G'$ also has no cycle of weight zero (this is the reason for using $1/2$ above and not $0$).
    Since the new edge has $\phi_H$-modified weight $1/2$, $H^+$ is still positively reduced by $\phi_H$ in $G'$, and therefore Lemma~\ref{lem:main1}a applies also to $G'$.
    Since in $G'$, the good escape $uu'$ in $G$ is still a good escape in $G'$ (we haven't added any escape), the lemma says that it is still an optimal edge for $\supsum^N$ in $G'$.
    Therefore the $\supsum^N$ values are the same in $G$ and $G'$ (since an optimal strategy does not use the new edge), and moreover
    \[
        \underbrace{\supsum^N(u)}_{\text{in } \Z} \leq 1/2 + \underbrace{\phi_H(u) - \phi_H(v) + \supsum^N(v)}_{\text{in } \Z},
    \]
    and therefore
    \[
        \supsum^N(u) \leq \phi_H(u) - \phi_H(v) + \supsum^N(v).
    \]
    This leads to the wanted bound:
    \[
        \supsum^N(v) \geq \supsum^N(u) + \phi_H(v) - \phi_H(u) = m + \phi_H(v),
    \]
    which concludes the proof of the first item.

    \vskip1em
    We now focus on the second item.
    The fact that there is an edge from $H^- \cap \Vmax$ to $F$ is given by Lemma~\ref{lem:main1}b.
    For the second statement, we first prove that for every vertex $v \in S$ we have $\supsum^N(v) \geq m+\phi_H(v)$.
    Consider a Max strategy $\tau$ which ensures that over $S$, a good escape is seen before any edge with $\phi_H$-modified weight $<0$ or any edge leaving from $H^-$ and from $F$, it is optimal for $\supsum^N$.
    Consider a vertex $v \in S$ and a play from $v$ consistent with $\tau$.
    If the play remains in $S$, then by definition of $\tau$ only edges with modified weight $\geq 0$ are seen, so the value is $\infty$ since there are no zero cycles.
    Otherwise, the play remains in $S$ until visiting a good escape $uu'$.
    Then, since $\tau$ is optimal from $u'$ we get 
    \[
        \begin{array}{lcl}
        &&\text{value of the play} \\&\geq& \text{weight of prefix up to u} + w(uu') + \supsum^N(u') \\
        & = & \underbrace{\text{modified weight of prefix up to u}}_{\geq 0} + \, \phi_H(v) \underbrace{- \phi_H(u) + w(uu') + \supsum^N(u')}_{\geq m} \\
        & \geq &\phi_H(v) +m.
        \end{array}
    \]
    
    There remains to prove that for every $v \in S$ we have $\sup^N(v) \leq m + \phi_H(v)$; for this inequality, we proceed just like in the previous item.
    Fix a vertex $u \in \Vmax \cap H^-$ which has a good escape as an outgoing edge $uu'$; we already know from Lemma~\ref{lem:main1}b that $\sup^N(u)=\phi_H(u)+m$.
    Consider the game $G'$ obtained from $G$ by adding the edge
    \[
        u \re{-1/2 + \phi_H(u) - \phi_H(v)} v.
    \]
    Note that $G'$ also has no cycle of weight zero.
    Since the new edge has $\phi_H$-modified weight $-1/2$, $H^-$ is still negatively reduced by $\phi_H$ in $G'$, and therefore Lemma~\ref{lem:main1}b applies also to $G'$.
    Since in $G'$, the optimal escape $uu'$ in $G$ is still an optimal escape in $G'$, the lemma says that it is still an optimal edge for $\supsum^N$ in $G'$.
    Therefore the $\supsum^N$ values are the same in $G$ and $G'$, and moreover
    \[
        \underbrace{\supsum^N(u)}_{\text{in } \Z} \geq -1/2 + \underbrace{\phi_H(u) - \phi_H(v) + \supsum^N(v)}_{\text{in } \Z},
    \]
    and therefore
    \[
        \supsum^N(u) \geq \phi_H(u) - \phi_H(v) + \supsum^N(v),
    \]
    which gives the wanted bound:
    \[
        \supsum^N(v) \leq \supsum^N(u) + \phi_H(v) - \phi_H(u) = m + \phi_H(v),
    \]
    which concludes the proof.
\end{proof}

We omit pseudocode for the computation of $S$ since it is similar to e.g.~the backtracking from the previous optimisation.

\subsection{Variant: choice between $\supsum^N$ or $\infsum^P$}

In line~\ref{line:choice}, we choose between computing $\supsum^N$ or $\infsum^P$ according to whether or not $|N| \leq |P|$.
This choice has no influence on the correctness or termination of the algorithm; any way of choosing between one or the other (including e.g.~always choosing $\supsum^N$) gives a valid algorithms.
One could imagine other choice policies, for instance the opposite one: choosing $\supsum^N$ over $\infsum^P$ whenever $|N| \geq |P|$.
Perhaps a meaningful choice, assuming the optimisation from Section~\ref{sec:optim1}, would be choosing $\supsum^N$ over $\infsum^P$ whenever $|S_N| \geq |S_P|$, where $S_N$ (resp. $S_P$) is the set of vertices from which Min (resp. Max) can ensure that $N$ (resp. $P$) is seen before any edge with weight $>0$ (resp. $<0$).

\subsection{Variant: remembering the potentials}\label{sec:variant-remember-potentials}

In the execution of Algorithm~\ref{alg:main}, just before the recursive call is made at line~\ref{line:main-recurse} on the smaller game $H$, we could apply an arbitrary potential reduction to $H$ without altering the correction or termination of the algorithm.
A meaningful choice would be applying the potential reduction associated to the potential $\phi_{H_{\pred}}$ which was computed in the previous iteration of the while loop.
At the level of intuition, this means that we do not discard the work that was performed in the previous iteration, and instead re-use it.
Intuitively, if $H_\pred$ (the game $H$ from the previous iteration of the while loop) and $H$ are similar (i.e.~only one vertex, or a few vertices if the above optimisation is used, has been removed), then we should expect that the potential $\phi_{H_\pred}$ is close to making $H$ reduced, so performing the potential update should reduce the runtime.
(Note that with this variation, we overcome the principal weakness of Zielonka's algorithm, where much of the information is discarded after each recursive call.)

Formally, we add a new variable, say $\phi_\pred$, containing a potential, initialised to $0$, and we add $H \gets \FReduceGame(H,\phi_\pred)$ just before line~\ref{line:main-recurse} and $\phi_\pred \gets \phi_H$ just after line~\ref{line:main-recurse}.

\subsection{Variant: lazy version}

In this last (drastic) variant, we propose to end the computation of $\supsum^N$ as soon as we detect that a vertex would leave $N_G$.
To achieve this, whenever we add a vertex $v'$ to $F$ in lines~\ref{line:add-v-H+},~\ref{line:add-v-H-} or inside the subprocedure from line~\ref{line:backtrack-all-paths}, we inspect its predecessors $v$ which belong to $N$ and determine whether the new weight of $vv'$ (i.e.~$w(vv')+\sup^N(v')$) is so that $v$ would exit $N$ (this occurs if $v$ is a Max vertex and $w(vv')+\sup^N(v')\geq 0$ or $v$ is a Min vertex, $w(vv')+\sup^N(v')\geq 0$ and there is no other outgoing edge from $v$ with weight $<0$.)
In this case, we fast-forward to line~\ref{line:potential-reduction}, i.e.~we immediately perform the potential update (corresponding to the current potential $\supsum^N$, which at this stage corresponds to the $\supsum^N$-values over $F$ and has value $0$ elsewhere) and the recursive call from the next line.
Termination is still guaranteed because the size of $N$ decreases.

\section{Discussion}\label{sec:futurework}

We have introduced a new algorithm for solving mean-payoff games.
Compared to existing ones, we believe that this algorithm has several novelties: it is recursive, based on backtracking (in parity-games terminology, one may call it ``attractor-based''), and symmetric.
One may be tempted to call it a mean-payoff analogue of Zielonka's algorithm for parity games~\cite{Zielonka98} which shares these features; however ours is more involved and differs from Zielonka's when executed on a parity game transformed into a mean-payoff game.

We believe that our algorithm is a candidate for having subexponential runtime.
This may require adding the two optimisations as well as using one (or several) of the variants from Section~\ref{sec:optimisations-and-variant}.
So far, we haven't been able to analyse the algorithm in a meaningful way, so we leave this as an open question.

A quick implementation also seems to reveal that our algorithm may be relevant for practical use; we also leave this interrogation to future work (and more qualified programmers).

\bibliography{bib}

@article{Zielonka98,
  author       = {Wieslaw Zielonka},
  title        = {Infinite Games on Finitely Coloured Graphs with Applications to Automata
                  on Infinite Trees},
  journal      = {Theoretical Computer Science},
  volume       = {200},
  number       = {1-2},
  pages        = {135--183},
  year         = {1998},
  bibsource    = {dblp computer science bibliography, https://dblp.org}
}

@inproceedings{OhlmannGKK,
  author       = {Pierre Ohlmann},
  title        = {The {GKK} Algorithm is the Fastest over Simple Mean-Payoff Games},
  booktitle    = {{CSR} 2022},
  series       = {LNCS},
  volume       = {13296},
  pages        = {269--288},
  publisher    = {Springer},
  year         = {2022},
  bibsource    = {dblp computer science bibliography, https://dblp.org}
}

@article{BV07,
  author       = {Henrik Bj{\"{o}}rklund and
                  Sergei G. Vorobyov},
  title        = {A combinatorial strongly subexponential strategy improvement algorithm
                  for mean payoff games},
  journal      = {Discret. Appl. Math.},
  volume       = {155},
  number       = {2},
  pages        = {210--229},
  year         = {2007},
  bibsource    = {dblp computer science bibliography, https://dblp.org}
}

@article{GKK88,
  author    = {V. A. Gurvich and A. V. Karzanov and L. G. Khachiyan},
  title     = {Cyclic games and an algorithm to find minimax cycle means in directed graphs},
  journal   = {USSR Computational Mathematics and Mathematical Physics},
  volume    = {28},
  pages     = {85--91},
  year      = {1988}
}

@article{ZP96,
  author    = {Uri Zwick and Mike Paterson},
  title     = {The complexity of mean payoff games on graphs},
  journal   = {Theoretical Computer Science},
  volume    = {158},
  number    = {1-2},
  pages     = {343--359},
  year      = {1996}
}

@inproceedings{BFLMS08,
  author    = {Patricia Bouyer and Uli Fahrenberg and Kim G. Larsen and Nicolas Markey and Jiří Srba},
  title     = {Infinite Runs in Weighted Timed Automata with Energy Constraints},
  booktitle = {FORMATS’08},
  series    = {LNCS},
  volume    = {5215},
  pages     = {33--47},
  publisher = {Springer‑Verlag},
  year      = {2008},
  month     = sep,
  
}

@inproceedings{CAHS03,
  author    = {Arindam Chakrabarti and Luca de Alfaro and Thomas A. Henzinger and Mariëlle Stoelinga},
  title     = {Resource Interfaces},
  booktitle = {EMSOFT 2003},
  series    = {LNCS},
  volume    = {2855},
  pages     = {117--133},
  publisher = {Springer, Berlin, Heidelberg},
  year      = {2003},
  
}

@inproceedings{LS24,
  author       = {Bruno Loff and
                  Mateusz Skomra},
  title        = {Smoothed Analysis of Deterministic Discounted and Mean-Payoff Games},
  booktitle    = {{ICALP} 2024},
  series       = {LIPIcs},
  volume       = {297},
  pages        = {147:1--147:16},
  publisher    = {Schloss Dagstuhl - Leibniz-Zentrum f{\"{u}}r Informatik},
  year         = {2024},
  
}

@inproceedings{DKZ19,
  author       = {Dani Dorfman and
                  Haim Kaplan and
                  Uri Zwick},
  title        = {A Faster Deterministic Exponential Time Algorithm for Energy Games
                  and Mean Payoff Games},
  booktitle    = {{ICALP} 2019},
  series       = {LIPIcs},
  volume       = {132},
  pages        = {114:1--114:14},
  publisher    = {Schloss Dagstuhl - Leibniz-Zentrum f{\"{u}}r Informatik},
  year         = {2019},
  
}

@inproceedings{CCO25,
  author       = {Micha{\"{e}}l Cadilhac and
                  Antonio Casares and
                  Pierre Ohlmann},
  title        = {Fast value iteration: {A} uniform approach to efficient algorithms
                  for energy games},
  booktitle    = {{TACAS} 2025},
  series       = {LNCS},
  volume       = {15697},
  pages        = {323--342},
  publisher    = {Springer},
  year         = {2025},
  bibsource    = {dblp computer science bibliography, https://dblp.org}
}

@inproceedings{Schewe08,
  author       = {Sven Schewe},
  title        = {An Optimal Strategy Improvement Algorithm for Solving Parity and Payoff
                  Games},
  booktitle    = {{CSL} 2008},
  series       = {LNCS},
  volume       = {5213},
  pages        = {369--384},
  publisher    = {Springer},
  year         = {2008},
  bibsource    = {dblp computer science bibliography, https://dblp.org}
}

@article{EM73,
author = {Andrzej Ehrenfeucht and Jan Mycielski},
title = {Positional games over a graph},
journal = {Notices of the American Mathematical Society},
volume={20},
year = {1973},
pages={A-334}
}

@article{EM79,
  author    = {Andrzej Ehrenfeucht and Jan Mycielski},
  title     = {Positional strategies for mean payoff games},
  journal   = {International Journal of Game Theory},
  volume    = {109},
  number    = {8},
  pages     = {109--113},
  year      = {1979},
 
  
}

@article{BCDGR11,
  author       = {Lubos Brim and
                  Jakub Chaloupka and
                  Laurent Doyen and
                  Raffaella Gentilini and
                  Jean{-}Fran{\c{c}}ois Raskin},
  title        = {Faster algorithms for mean-payoff games},
  journal      = {Formal Methods Syst. Des.},
  volume       = {38},
  number       = {2},
  pages        = {97--118},
  year         = {2011},
  
}

\appendix

\end{document}